\documentclass[english,runningheads]{article}
\usepackage{latexsym}
\usepackage{graphicx,amssymb,amsmath,amssymb,mathrsfs}
\usepackage{hyperref}
\usepackage{cite}
\usepackage[linesnumbered, vlined, ruled]{algorithm2e}
\usepackage{amsthm}
\usepackage{fullpage}
\usepackage[nodayofweek]{datetime}
\usepackage{enumerate}
\usepackage{multirow}
\usepackage[mathlines]{lineno}
\usepackage{color}

\graphicspath{{figs/}}
%=======================my definitions============================
\newcommand{\ShoLong}[2]{#2} %New command to have short and long versions coexist. "#1" makes a short version and  "#2" a long one

\newcommand{\spm}{\ensuremath{S\!P\!M}}
\newcommand{\spt}{\ensuremath{S\!P\!T}}

\newcommand{\Real}{\ensuremath{\mathbb{R}}}
\newcommand{\Plane}{\ensuremath{\mathbb{R}^2}}
\newcommand{\bd}{\ensuremath{\partial}}

\newcommand{\intr}{\ensuremath{\mathrm{int}}}

\newcommand{\seg}{\overline}

\newcommand{\far}{\ensuremath{\phi}}
\newcommand{\geoball}{\ensuremath{\mathbf{B}}}
\newcommand{\geocen}{\ensuremath{\mathrm{cen}}}
\newcommand{\rad}{\ensuremath{\mathrm{rad}}}
\newcommand{\diam}{\ensuremath{\mathrm{diam}}}

\newcommand{\complain}[1]{\marginpar{\framebox{~~~}}{\textcolor{red}{#1}}}
%\def\lemmaspace{\vspace*{-0.07in}}
%\def\sectionspace{\vspace*{-0.10in}}
%\def\subsectionspace{\vspace*{-0.06in}}

\iffalse
\makeatletter \newbox\ProofSym \setbox\ProofSym=\hbox{%
  \unitlength=0.18ex%
  \begin{picture}(10,10) \put(0,0){\framebox(9,9){}}
    \put(0,3){\framebox(6,6){}}
  \end{picture}}
\renewenvironment{proof}[1][Proof.]{\O@proof{#1}}{\O@endproof}
\def\O@proof#1{\trivlist
  \@topsep\z@\@topsepadd\smallskipamount%
  \@ifstar{\item[]}{\item[\hskip\labelsep\it #1 ]}}
\def\O@endproof{\hfill\copy\ProofSym\endtrivlist}
\makeatother
\fi

\def\denseitems{
    \itemsep1pt plus1pt minus1pt
    \parsep0pt plus0pt
    \parskip0pt\topsep0pt}

%=======================End of my definitions======================

\begin{document}

%\baselineskip=14.0pt
%\linenumbers

\title{Computing the $L_1$ Geodesic Diameter and Center of a Simple Polygon in Linear Time\thanks{A preliminary version of this paper appeared in the proceedings of the {\em11th Latin American Theoretical INformatics Symposium} (LATIN'14)~\cite{bkow-clgdcsplt-13}.}}
%\subtitle{[Extended Abstract]}
\author{
Sang Won Bae\thanks{Kyonggi University, Suwon, South Korea.{\tt swbae@kgu.ac.kr}
}
%\inst{1}
\and
Matias Korman\thanks{National Institute of Informatics, Tokyo, Japan. {\tt korman@nii.ac.jp}} $^,$\thanks{JST, ERATO, Kawarabayashi Large Graph Project.}
%\inst{2}\fnmsep\inst{3}
\and
Yoshio Okamoto\thanks{The University of Electro-Communications, Tokyo, Japan. {\tt okamotoy@uec.ac.jp}
}%\inst{4}%\fnmsep%\thanks{}
\and
Haitao Wang\thanks{Utah State University, Logan, Utah, USA. {\tt haitao.wang@usu.edu}
}%\inst{5}%\fnmsep\thanks{%H. Wang's research was Supported in part by NSF under Grant CCF-1317143.}
%Fax: {\tt 1-574-631-9260}.
}
%\titlerunning{$L_1$ Geodesic Diameter and Center of a Simple Polygon}

\iffalse
\institute{
%Department of Computer Science,
Kyonggi University, Suwon, South Korea.
\email{swbae@kgu.ac.kr}
\and
%Departament de Matem\`{a}tica Aplicada II,
National Institute of Informatics, Tokyo, Japan. \email{korman@nii.ac.jp}\\
\and
JST, ERATO, Kawarabayashi Large Graph Project.\\
\and
%Department of Communication Engineering and Informatics,
The University of Electro-Communications, Tokyo, Japan.
\email{okamotoy@uec.ac.jp}
\and
%Department of Computer Science,
Utah State University, Logan, Utah, USA.
\email{haitao.wang@usu.edu}
}
\fi
%\subtitle{%
%{\normalsize[\today]}
%}

\maketitle

\newtheorem{theorem}{Theorem}
\newtheorem{lemma}[theorem]{Lemma}
\newtheorem{corollary}[theorem]{Corollary}
\newtheorem{fact}[theorem]{Fact}
\newtheorem{observation}[theorem]{Observation}
%\newtheorem{definition}{Definition}
%\newtheorem{condition}{Condition}
%\newtheorem{property}{Property}
%\newtheorem{claim}{Claim}
%\newenvironment{proof}{\noindent {\textbf{Proof:}}\rm}{\hfill $\Box$
%\rm}

%\newtheoremstyle{mytheorem}{3pt}{3pt}{\slshape}{}{\bfseries}{}{.5em}{}
%\theoremstyle{mytheorem}
%\newtheorem{lemma}{Lemma}
%\newtheorem{theorem}{Theorem}
%\newtheorem{proposition}{Proposition}
%\newtheorem{corollary}{Corollary}
%\newtheorem{observation}{Observation}
%\newtheorem{claim}{Claim}
%\newtheorem{fact}{Fact}
%\theoremstyle{definition}
%\newtheorem{definition}{Definition}
%\newtheorem{condition}{Condition}
%\newtheorem{property}{Property}

%\setcounter{page}{0}

%\vspace{-0.3in}
\begin{abstract}
In this paper, we show that the $L_1$ geodesic diameter and center
of a simple polygon can be computed in linear time.
For the purpose, we focus on revealing basic geometric properties of the $L_1$ geodesic balls,
that is, the metric balls with respect to the $L_1$ geodesic distance.
More specifically, in this paper we show that
any family of $L_1$ geodesic balls in any simple polygon has Helly number two,
and the $L_1$ geodesic center consists of midpoints of shortest paths between diametral pairs.
These properties are crucial for our linear-time algorithms,
and do not hold for the Euclidean case.
\end{abstract}

%%%%%%%%%%%%%%%%%%%%%%%%%%%%%%%%%%%%%%
\section{Introduction}
\label{sec:intro}
%%%%%%%%%%%%%%%%%%%%%%%%%%%%%%%%%%%%%%%

Let $P$ be a simple polygon with $n$ vertices in the plane.
%As a compact and simply connected metric space,
%Given a metric $d$ on $P$,
%Among many common parameters to describe $P$,
The \emph{diameter} and \emph{radius} of $P$ with respect to a certain metric $d$
are very natural and important measures of $P$. The diameter with respect to $d$ is defined to be the maximum distance over all pairs of points in $P$, that is, $\max_{p,q \in P} d(p, q)$,
while the radius is defined to be the min-max value $\min_{p\in P} \max_{q\in P} d(p, q)$. Here, the polygon $P$ is considered as a closed and bounded space
and thus the diameter and radius of $P$ with respect to $d$ are well defined.
A pair of points in $P$ realizing the diameter is called
a \emph{diametral pair}. Similarly, any point $c$ such that
$\max_{q\in P} d(c, q)$ is equal to the radius
is called a {\em center}. In this paper we study how fast can we compute these measures (and whenever possible, to also obtain the set of points that realize them).

%
%We address the problem of computing the $L_1$ geodesic diameter and center
%of a simple polygon $P$.
%For any two points $p,q\in P$ in a given simple polygon $P$,
%we only consider paths connecting $p$ and $q$ that lie in $P$, and
%mostly shortest paths, being the minimum in length.
%The \emph{geodesic distance} $d(p,q)$ is defined to be the length of
%a shortest path between $p$ and $q$ in $P$.
%A very natural measure of a polygonal domain $P$
%is the \emph{geodesic diameter} defined to be the maximum geodesic distance
%over all pairs of points $p,q \in P$,
%that is, $\max_{p\in P} \max_{q\in P} d(p,q)$.
%Closely related to the geodesic diameter is the min-max quantity
%$\min_{p\in P} \max_{q\in P} d(p,q)$,
%in which a point $p^*$ that minimizes $\max_{q\in P} d(p^*,q)$ is called
%a \emph{geodesic center} of $P$.
%Each of the above quantities is called \emph{Euclidean} or \emph{$L_1$} depending on
%which of the Euclidean or the $L_1$ metric is adopted to measure the length of paths.

One of the most natural metrics on a simple polygon $P$
is induced by the length of the Euclidean shortest paths that stay within $P$,
namely, the \emph{(Euclidean) geodesic distance}.
The problem of computing the diameter and radius of a simple polygon with respect to the geodesic distance has been intensively studied in computational geometry since the early 1980s.
The diameter problem was first studied by Chazelle~\cite{c-tpca-82}, who gave an $O(n^2)$-time algorithm. The running time was afterwards improved to $O(n\log n)$ by Suri~\cite{s-cgfnsp-89}.
Finally, Hershberger and Suri~\cite{hs-msspm-97} presented a linear-time algorithm
based on a fast matrix search technique.
Recently, Bae et al.~\cite{bko-gdpd-13} considered the diameter problem
for polygons with holes.

The first algorithm for finding the Euclidean geodesic radius was
given by Asano and Toussaint~\cite{at-cgcsp-85}. In their study, they
showed that any simple polygon has a unique center, and provided an
$O(n^4\log n)$-time algorithm for computing it. The running time was
afterwards reduced to $O(n\log n)$ by Pollack, Sharir, and
Rote~\cite{psr-cgcsp-89}. Since then, it has been a longstanding open
problem whether the center can be computed in linear time (as also
mentioned later by Mitchell~\cite{m-spn-04}).

Another popular metric with a different flavor is the \emph{link distance},
which measures the smallest possible number of links (or turns) of piecewise linear paths.
%After Suri~\cite{suri-mlpprp-87} introduced the link distance in a simple polygon,
%many researchers have paid their attention to related problems on it,
%including the problem of computing the link diameter and center.
The currently best algorithms that compute the link diameter or radius
run in $O(n \log n)$ time~\cite{suri-mlpprp-87,k-ealdp-89,dls-aclcsp-92}.
The \emph{rectilinear link distance} measures the minimum number of links
when feasible paths in $P$ are constrained to be rectilinear.
Nilsson and Schuierer~\cite{ns-crldp-91,ns-oarlcrp-96} showed how to solve the problem under the rectilinear link distance in linear time.

In order to tackle the open problem of computing the Euclidean geodesic center, we investigate another natural metric: the  $L_1$ metric.
To the best of our knowledge, only a special
case where the input polygon is rectilinear has been
considered in the literature. This result is given by Schuierer~\cite{s-cl1dcsrp-94}, where he showed how to compute the $L_1$ geodesic diameter and radius of a simple rectilinear polygon in linear time.
%In this paper, we show that
%they can also be computed in linear time in a simple polygon
%that is not necessarily rectilinear.

This paper aims to provide a clear and complete exposition
on the diameter and radius of general simple polygons with
respect to the $L_1$ geodesic distance. We first focus on
revealing basic geometric properties of the geodesic
balls (that is, the metric balls with respect to the $L_1$
geodesic distance). Among other results, we show that any family of $L_1$ geodesic balls has Helly number
two (see Theorem~\ref{thm:geoball_helly}). %a Helly-type theorem for the $L_1$ geodesic balls; any
This is a crucial property that does not hold for the Euclidean geodesic distance, and thus
we identify that the main difficulty of the open problem lies there.
%Several nice behaviors follow from this Helly-type property, which we afterwards use to derive linear-time algorithms for computing the diameter and center.

%of $L_1$ geodesic diameter
%and center are derived from the geometric properties of $L_1$
%geodesic balls, and consequently we give linear-time algorithms
%that compute the $L_1$ geodesic diameter and center of a simple polygon.

We then show that the method of Hershberger and Suri~\cite{hs-msspm-97} for computing the Euclidean diameter extends to $L_1$ metrics, and that the running time is preserved. However, the algorithms for computing the Euclidean radius do not  easily extend to rectilinear metrics. %However, the algorithm for computing the center is completely different.
Indeed, even though the approach of Pollack et al.~\cite{psr-cgcsp-89} can be adapted for  the $L_1$ metric, the running time is $O(n \log n)$. On the other hand, the algorithm of Schuierer~\cite{s-cl1dcsrp-94} for rectilinear simple polygons heavily exploits properties derived from rectilinearity. Thus, its extension to general simple polygons is not straightforward either.

In this paper we use a different approach: using our Helly-type theorem for $L_1$ geodesic balls, we show that the set of points realizing $L_1$ geodesic centers coincides with the intersection of  a finite number of geodesic balls. Afterwards we show how to compute their intersection in linear time.
%Using some geometric properties, we show that the center can be computed as the intersection of $n$ balls. Afterwards, we use several Helly properties to compute the intersection in linear time.
Table~\ref{table:summary} summarizes the currently best results
on computing the diameter and radius of a simple polygon with respect to
the most common metrics, including our new results.

%\complain{Mati: here we should add an example in which $L_1$ and euclidean centers are distinct }
\begin{table}[t]
\label{table:summary} \centering \caption{Summary of currently best
results on computing the diameter and radius of a simple polygon $P$
with respect to various metrics on $P$. } %\small
%\vspace{-2pt}
\begin{tabular}{cllcccc }
\hline
\multicolumn{2}{c}{Metric} & Restriction on $P$\phantom{M} & \multicolumn{2}{c}{Diameter} & \multicolumn{2}{c}{Radius} \\
\hline
\hline
 \multirow{3}*{Geodesic\phantom{M}} & Euclidean\phantom{MM} & simple & $O(n)$ & \cite{hs-msspm-97} & $O(n \log n)$ & \cite{psr-cgcsp-89} \\
\cline{2-7}
  & \multirow{2}*{$L_1$} & rect.\ simple  & $O(n)$ & \cite{s-cl1dcsrp-94} & $O(n)$ & \cite{s-cl1dcsrp-94} \\
  &  & simple  & $O(n)$ & [Thm.~\ref{thm:diameter}] & $O(n)$ & [Thm.~\ref{thm:center}] \\
\hline
\multirow{2}*{Link} & regular & simple & $O(n\log n)$  & \cite{suri-mlpprp-87} & $O(n\log n)$ & \cite{k-ealdp-89,dls-aclcsp-92} \\
\cline{2-7}
 & rectilinear & rect.\ simple & $O(n)$ & \cite{ns-crldp-91} & $O(n)$ & \cite{ns-oarlcrp-96}\\
\hline
\end{tabular}
\vspace{-6pt}
\end{table}

\ShoLong{Due to page limit, several proofs are omitted. They can be found in the extended version of this paper~\cite{bkow-clgdcsplt-13}.}{}

%%%%%%%%%%%%%%%%%%%%%%%%%%%%%%%%%%%%%%%%%%
\section{Preliminaries}
\label{sec:pre}
%%%%%%%%%%%%%%%%%%%%%%%%%%%%%%%%%%%%%%%%%%

For any subset $A\subset \Real^2$, we denote by $\bd A$ and $\intr A$
the boundary and the interior of $A$, respectively.
For  $p,q \in \Real^2$,
denote by $\seg{pq}$ the line segment with endpoints $p$ and $q$.
For any path $\pi$ in $\Plane$,
let $|\pi|$ be the length of $\pi$ under the $L_1$ metric,
or simply the \emph{$L_1$ length}.
Note that $|\seg{pq}|$ equals the $L_1$ distance between $p$ and $q$.
%When we need to consider the Euclidean length of a path,
%we shall denote it by $|\pi|_2$.
%Two points $p$ and $q$ in $\calP$
%are {\em visible} to each other if $\overline{ab}$ is contained in
%$\calP$.

The following is a basic observation on the $L_1$ length of paths in $\Real^2$.
%which will be used in our discussion.
A path is called \emph{monotone} if any vertical or horizontal line intersects it
in at most one connected component.
\begin{fact} \label{fact:l1length}
 For any monotone path $\pi$ between $p,q\in\Real^2$,
 it holds that $|\pi| = |\seg{pq}|$. %\hfill\qed
\end{fact}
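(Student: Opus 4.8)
The plan is to work directly from the definition of the $L_1$ length of a path as the supremum of $\sum_i |\seg{p_ip_{i+1}}|$ over all finite subdivisions $p = p_0, p_1, \dots, p_k = q$ of $\pi$ taken in traversal order. Writing $p_i = (x_i, y_i)$, the $L_1$ distance splits coordinatewise, $|\seg{p_ip_{i+1}}| = |x_{i+1}-x_i| + |y_{i+1}-y_i|$, so the sum over any subdivision separates into an $x$-part $\sum_i |x_{i+1}-x_i|$ and a $y$-part $\sum_i |y_{i+1}-y_i|$. If I can show that each of the two coordinates is monotone along $\pi$, then both sums will telescope to constants independent of the subdivision, namely $|x_q-x_p|$ and $|y_q-y_p|$, and the supremum collapses to $|x_q-x_p| + |y_q-y_p| = |\seg{pq}|$, which is exactly the claim.

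So the heart of the argument is to prove that the $x$-coordinate is monotone (weakly increasing or weakly decreasing) as one traverses $\pi$ from $p$ to $q$, and symmetrically for the $y$-coordinate. I would argue this by contradiction. If the $x$-coordinate were not monotone along $\pi$, then by a standard continuity argument there are three points $a$, $b$, $c$ occurring in this order along $\pi$ at which the $x$-coordinate attains a strict local extremum at $b$; say its value at $b$ exceeds its values at both $a$ and $c$. Choosing any $v$ strictly between that larger flanking value and the value at $b$, and applying the intermediate value theorem to the $x$-coordinate along the subarc from $a$ to $b$ and along the subarc from $b$ to $c$, I obtain one point on each subarc lying on the vertical line $\{x = v\}$. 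These two points are separated along $\pi$ by $b$, whose $x$-coordinate exceeds $v$, so they lie in two distinct connected components of the intersection of $\{x = v\}$ with $\pi$ --- contradicting the defining property of a monotone path. The local-minimum case, and the identical argument applied to horizontal lines for the $y$-coordinate, complete the monotonicity claim.

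Once monotonicity is in hand, the two coordinate sums telescope: for a weakly increasing sequence $x_0, \dots, x_k$ one has $\sum_i |x_{i+1}-x_i| = \sum_i (x_{i+1}-x_i) = x_k - x_0 = |x_q - x_p|$, and the weakly decreasing case gives the same value $|x_q - x_p|$; the $y$-part behaves identically. Hence every subdivision yields the common value $|x_q-x_p| + |y_q-y_p|$, so the supremum equals it and the Fact follows. (As a by-product, coordinatewise monotonicity shows $\pi$ has finite $L_1$ length, so no separate rectifiability hypothesis is needed.) The only delicate step is the monotonicity argument; the telescoping is routine, so I expect the real content to be the intermediate-value contradiction against the connected-component hypothesis.
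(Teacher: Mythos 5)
The paper offers no proof of this Fact at all --- it is stated as a basic observation --- so there is no argument of the authors' to compare against; your proof is the standard one and is essentially correct. The coordinatewise splitting of the $L_1$ length, the telescoping of each coordinate sum under monotonicity, and the extraction of a strict interior extremum from a failure of monotonicity followed by two applications of the intermediate value theorem are all sound. The one step worth flagging is the inference from ``the two points on the line $x=v$ are separated along $\pi$ by $b$'' to ``they lie in distinct connected components of $\pi\cap\{x=v\}$.'' This is valid when $\pi$ is a simple arc: then the components of the point set $\pi\cap\{x=v\}$ correspond to the components of the parameter preimage, which are subintervals of the parameter domain and so cannot contain both flanking parameters without containing the parameter of $b$. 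It can fail for a path that retraces itself --- for instance the path $(0,0)\to(1,0)\to(0,0)\to(0,1)$ meets every axis-parallel line in a connected set as a point set, yet has $L_1$ length $3$ while its endpoints are at $L_1$ distance $1$. Since the Fact itself is false for such degenerate paths, this is really an implicit hypothesis on what counts as a (monotone) path rather than a defect in your argument, but it deserves an explicit sentence; for the polygonal, non-self-retracing paths to which the paper applies the Fact, your proof goes through verbatim.
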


Let $P$ be a simple polygon with $n$ vertices.
We regard $P$ as a compact set in $\Plane$, so its boundary $\bd P$ is contained in $P$.
An \emph{$L_1$ shortest path} between $p$ and $q$ is a path
joining $p$ and $q$ that lies in $P$ and minimizes its $L_1$ length.
The \emph{$L_1$ geodesic distance} $d(p,q)$ is the $L_1$ length
of an $L_1$ shortest path between $p$ and $q$.
We are interested in two quantities: the \emph{$L_1$ geodesic diameter} $\diam(P)$
and \emph{radius} $\rad(P)$ of $P$, defined to be
$ \diam(P) := \max_{p,q\in P} d(p,q)$ and $\rad(P) := \min_{p\in P} \max_{q\in P} d(p,q)$.
%\[ \diam(P) := \max_{p,q\in P} d(p,q), \qquad \rad(P) := \min_{p\in P} \max_{q\in P} d(p,q).\]
Any pair of points $p, q\in P$ such that $d(p,q) = \diam(P)$ is called
a \emph{diametral pair}.
A point $c\in P$ is said to be an \emph{$L_1$ geodesic center} if and
only if $\max_{q\in P} d(c, q) = \rad(P)$. We denote by $\geocen(P)$
the set containing all $L_1$ geodesic centers of $P$.
%the set of points $c\in P$
%whose maximum geodesic distance to other points in $P$ is equal to the radius $\rad(P)$,
%that is,
%$\geocen(P) := \{ c \in P \mid \max_{q\in P} d(c, q) = \rad(P)\}$.

Analogously, a path lying in $P$ minimizing its \emph{Euclidean} length
is called the \emph{Euclidean shortest path}.
It is well known that there is always a unique Euclidean shortest path
between any two points in a simple polygon~\cite{ghlst-ltavspptsp-87}.
We let $\pi_2(p,q)$ be the unique Euclidean shortest path
from $p\in P$ to $q\in P$.
The following states a crucial relation between Euclidean and $L_1$ shortest paths
in a simple polygon.
\begin{fact}[Hershberger and Snoeyink~\cite{hs-cmlpghc-94}] \label{fact:euc_simple}
 For any two points $p,q\in P$,
 the Euclidean shortest path $\pi_2(p,q)$ is also an $L_1$ shortest path between $p$ and $q$.
%  \hfill\qed
\end{fact}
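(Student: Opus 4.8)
The plan is to reduce the statement to a single nontrivial inequality. Since $d(p,q) = \min_{\sigma} |\sigma|$ over all paths $\sigma$ lying in $P$ from $p$ to $q$, and $\pi_2(p,q)$ is one such path, we trivially have $d(p,q) \le |\pi_2(p,q)|$. All the content is in the reverse inequality $|\pi_2(p,q)| \le d(p,q)$, i.e.\ that $\pi_2(p,q)$ is itself $L_1$ shortest. To establish this I would fix an arbitrary path $\sigma$ in $P$ from $p$ to $q$ and show $|\pi_2(p,q)| \le |\sigma|$; taking the minimum over $\sigma$ then gives $|\pi_2(p,q)| \le d(p,q)$, and hence equality.

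The engine is a local straightening operation that is $L_1$-non-increasing. For any two points $a,b$, the chord $\seg{ab}$ satisfies $|\seg{ab}| = $ the $L_1$ distance between $a$ and $b$, which is the minimum $L_1$ length of \emph{any} planar path joining them: this is the companion of Fact~\ref{fact:l1length} (a straight segment is monotone, so it attains the bound, and no path beats it because $L_1$ length is additive over sub-segments and obeys the triangle inequality). Consequently, whenever a sub-path $\tau$ of a $P$-path runs between points $a$ and $b$ with $\seg{ab}\subseteq P$, replacing $\tau$ by $\seg{ab}$ keeps the path inside $P$ and does not increase its $L_1$ length (nor its Euclidean length).

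I would then \emph{pull $\sigma$ taut}, applying such straightenings as a string-contraction. Since $P$ is simply connected, $\sigma$ and $\pi_2(p,q)$ are homotopic rel endpoints, and the taut-string limit of $\sigma$ under repeated straightening is the locally Euclidean-shortest representative of this homotopy class --- a polygonal chain that turns only at reflex vertices of $P$ and bends around each of them. By uniqueness of the Euclidean shortest path in a simple polygon~\cite{ghlst-ltavspptsp-87}, this limit is exactly $\pi_2(p,q)$. Because every straightening step is $L_1$-non-increasing, the limit has $L_1$ length at most $|\sigma|$, i.e.\ $|\pi_2(p,q)| \le |\sigma|$. As $\sigma$ was arbitrary, $|\pi_2(p,q)| \le d(p,q)$, completing the proof.

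The main obstacle is making ``pulling taut'' rigorous: that the straightening process converges and that its limit is precisely $\pi_2(p,q)$ rather than merely some $L_1$-shortest path. Here I would lean on the classical taut-string characterization of shortest paths in a simple polygon: the combinatorial type of the path --- the ordered sequence of reflex vertices it wraps and the side on which it passes each --- is determined by the homotopy class, hence by the topology of $P$, independently of the underlying convex, translation-invariant metric; only the maximal straight sub-segments depend on the metric, and on those the $L_1$ and Euclidean lengths are minimized simultaneously. One must also handle the case where a reflex vertex obstructs a chord, in which case the chord is split exactly at that vertex (which is where the taut path bends), so the deformation never leaves $P$. As a cross-check of the final answer, note that between consecutive bend vertices $\pi_2(p,q)$ is a straight, hence monotone, segment, so by Fact~\ref{fact:l1length} its $L_1$ length already equals the $L_1$ distance of its endpoints, and bends occur only where forced by reflex vertices of $P$.
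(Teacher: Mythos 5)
The paper does not prove this statement: it imports it as a black box from Hershberger and Snoeyink~\cite{hs-cmlpghc-94}, so there is no in-paper argument to compare against. Your proposal is, in outline, the standard proof of that result, and its crux is sound: the $L_1$ length of a rectifiable planar path is at least the $L_1$ distance between its endpoints (inscribed polygonal sums plus the triangle inequality for the $L_1$ norm), a straight segment attains this bound, hence replacing a sub-path by an \emph{unobstructed} chord never increases $L_1$ length; combined with the fact that the taut representative of the (unique, since $P$ is simply connected) homotopy class is exactly $\pi_2(p,q)$, this gives $|\pi_2(p,q)|\le|\sigma|$ for every competitor $\sigma$, which is all that is needed.

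Two points require care, and one of them is a genuine soft spot. First, an arbitrary sequence of straightenings need not converge to the taut path, so you should fix a canonical scheme: reduce to polygonal competitors (the infimum defining $d(p,q)$ is approached by such paths via inscribed chords) and run the funnel/sleeve procedure on a triangulation of $P$, which performs finitely many unobstructed chord replacements and terminates at $\pi_2(p,q)$; this also sidesteps any appeal to lower semicontinuity of length under limits. Second, and more substantively, your fallback for an obstructed chord --- replacing the sub-path from $a$ to $b$ by $\seg{av}\cup\seg{vb}$ through the obstructing reflex vertex $v$ --- is not known to be $L_1$-non-increasing at that stage of the argument: that is exactly the statement being proved for the sub-instance $(a,b)$, so invoking it is circular, and if you do not invoke it the chain of inequalities breaks at those steps. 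The process must be arranged so that \emph{every} replacement is by a single chord lying in $P$, which the funnel algorithm guarantees. Finally, the claim that ``the combinatorial type of the taut path is independent of the underlying convex metric'' is stronger than needed and is essentially the theorem itself; all you need is that the unique locally Euclidean-geodesic path from $p$ to $q$ in a simple polygon is $\pi_2(p,q)$, which follows from~\cite{ghlst-ltavspptsp-87} alone.
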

Notice that this does not imply coincidence between
the Euclidean and the $L_1$ geodesic diameters or centers,
as the lengths of paths are measured differently\ShoLong{. }{ (see an example in Figure~\ref{fig_examples}).
\begin{figure}[tb]
  \center
  \includegraphics[width=.60\textwidth]{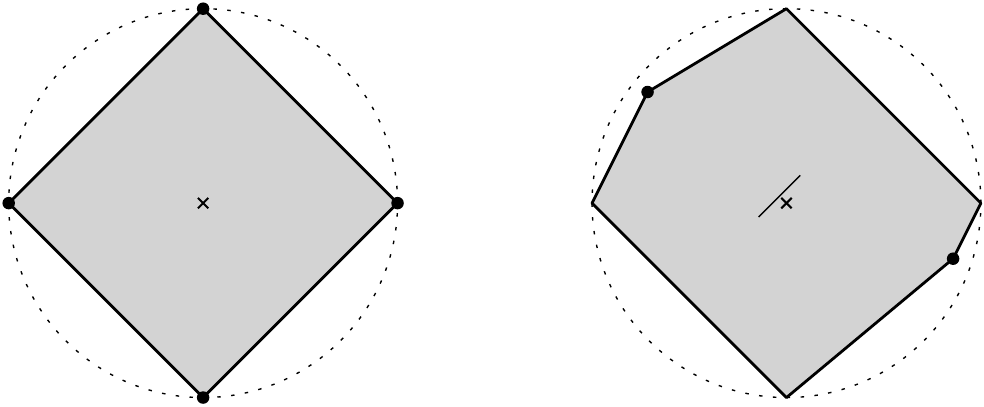}
  \caption{Examples of simple polygons and their center/diameters under both $L_1$ and Euclidean metrics. The left figure is the unit $L_1$ ball centered at the origin (the Euclidean ball is also shown for clarity). In the $L_2$ metric the diameter is realized by any opposite corners of the polygon. The diameter is equal in $L_1$ metric, but now it is realised by any two points in opposite edges of the boundary. In this problem instance the radius is realized by the same point in both metrics (the origin, depicted as a cross). In the right figure the unit ball has been slightly modified with two additional vertices (of coordinates $(-\frac{5}{7},\frac{4}{7})$ and $(\frac{6}{7},-\frac{2}{7})$). Since the polygon is still contained in the Euclidean unit ball, neither the Euclidean diameter or center change with the addition of the new vertices. However, the $L_1$ diameter is now uniquely realized by the two new vertices. Also, the set of points that realize the radius has shifted slightly upwards and is no longer unique: any point in the segment of endpoints $-(\frac{2}{14},\frac{1}{14})$ and $(\frac{1}{14},\frac{2}{14})$ (the thin segment in the Figure) is an $L_1$ center.}
  \label{fig_examples}
\end{figure}
}
Nonetheless, Fact~\ref{fact:euc_simple} enables us
to exploit structures for Euclidean shortest paths such as the shortest path map.

A \emph{shortest path map} for a source point $s\in P$
is a subdivision of $P$ into regions according to the combinatorial structure of shortest paths
from $s$.
For Euclidean shortest paths,
Guibas et al.~\cite{ghlst-ltavspptsp-87} showed that
the shortest path map $\spm(s)$ can be computed in $O(n)$ time.
Once we have $\spm(s)$, the Euclidean geodesic distance from $s$ to any query point $q\in P$
can be computed in $O(\log n)$ time, and the actual path $\pi_2(s, q)$ in additional time
proportional to the complexity of $\pi_2(s, q)$.
%and Guibas and Hershberger~\cite{gh-ospqsp-89}
Fact~\ref{fact:euc_simple} implies that the map $\spm(s)$ also plays
a role as a shortest path map for the $L_1$ geodesic distance so that
a query $q\in P$ can be processed in the same time bound
to evaluate the $L_1$ geodesic distance $d(s, q)$ or to obtain the shortest path $\pi_2(s, q)$.

%Although I agree with the claim, more details are needed. We must justify that, by replacing the metric the algorithm still holds. Say that within a triangle the distance  is of the form d(s,p)=d(s,v)+d(v,q) for some vertex v, and that inductively the first one is kknown, so that evaluating the second with L1 is enough. I see it is long for the ISAAC version, but we should add something like "details will be given in the final version

%\begin{lemma}[Guibas and Hershberger~\cite{gh-ospqsp-89}] \label{lem:spm}
% For any $s\in P$, an $L_1$ shortest path map $\spm(s)$ of size $O(n)$
% can be computed in $O(n)$ time.
% The map $\spm(s)$ stores the shortest path tree on the vertices of $P$ and $s$
% and can answer the geodesic distance $d(s, q)$ for any query point $q\in P$
% in $O(\log n)$ time and the actual path $\pi_2(s, q)$ in an additional time proportional to
% the complexity of the path.
%\end{lemma}

Throughout the paper, unless otherwise stated,
$P$ refers to a simple polygon,
a shortest path and the geodesic distance always refer to an $L_1$ shortest path
and the $L_1$ geodesic distance $d$,
and the geodesic diameter/center is always assumed to be with respect to the $L_1$ geodesic distance $d$.
%On the other hand, when we need to discuss the Euclidean variants of the above concepts,
%we will explicitly attach \emph{Euclidean}, e.g., Euclidean shortest paths.

%It is obvious that there always exists an $L_1$ shortest path $\pi$ between $s$ and $t$
%consisting of line segments and it can be represented as a sequence of their endpoints
%$\pi = <s, p_1, \ldots, p_k, t>$.
%We call the pair $(p_1, p_k)$ the \emph{guide} to path $\pi$,
%and $\pi$ is said to be a $(p_1, p_k)$-guided shortest path.
%In general, if there exist a pair $(p, q)$ and a $(p,q)$-guided shortest path between
%$s$ and $t$, then the pair $(p,q)$ shall be called a \emph{guide}
% to the source-destination pair $(s,t)$.
%By definition, if $(p,q)$ is a guide to $(s,t)$,
%then we have $d(s,t) = |\seg{sp}| + d(p,q) + |\seg{qt}|$.

%%%%%%%%%%%%%%%%%%%%%%%%%%%%%%%%%%%%%%%%%%
\section{The $L_1$ Geodesic Balls}
\label{sec:geoball}
%%%%%%%%%%%%%%%%%%%%%%%%%%%%%%%%%%%%%%%%%%
Geodesic balls (or geodesic disks) are metric balls under the geodesic distance $d$.
More precisely, the \emph{$L_1$ (closed) geodesic ball} centered at point $s\in P$
with radius $r \in \Real$, denoted by $\geoball_s(r)$,
is the set of points $x\in P$ such that $d(s, x) \leq r$.
Note that if $r < 0$, it holds that $\geoball_s(r) = \emptyset$.
In this section, we reveal several geometric properties of the geodesic balls $\geoball_s(r)$,
which build a basis for our further discussion.

\subsection{$P$-convex sets}
A subset $A \subseteq P$ is \emph{$P$-convex} if
for any $p, q\in A$, the Euclidean shortest path $\pi_2(p,q)$
is a subset of $A$.
The $P$-convex sets are also known as the \emph{geodesically convex} sets
in the literature~\cite{s-cgfnsp-89}.
Pollack et al.~\cite{psr-cgcsp-89} achieved their $O(n \log n)$-time
algorithm computing the Euclidean geodesic center based on
the $P$-convexity of Euclidean geodesic balls. A set $A$ is {\em
path-connected} if and only if, for any $x,y\in A$, there exists a path $\pi$ connecting them such that $\pi\subseteq A$. With this definition we can introduce an equivalent condition of $P$-convexity.
\begin{lemma} \label{lem:P-convex}
For any subset $A\subseteq P$ of $P$, the following are equivalent.
 \begin{enumerate}[(i)] \denseitems
 \item $A$ is $P$-convex.
 \item $A$ is path-connected and for any line segment $\ell\subset P$, $A\cap \ell$ is connected.
\end{enumerate}
\end{lemma}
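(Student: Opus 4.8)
The plan is to prove the two implications separately, extracting from the segment condition one convenient intermediate property that does the real work.

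For (i) $\Rightarrow$ (ii), path-connectedness is immediate: for $p,q\in A$ the path $\pi_2(p,q)$ already lies in $A$ and joins them. For the segment condition, fix a segment $\ell\subset P$ and two points $a,b\in A\cap\ell$. Since $\seg{ab}\subseteq\ell\subseteq P$, the straight segment lies in $P$, so it is the (unique) Euclidean shortest path between $a$ and $b$; that is, $\pi_2(a,b)=\seg{ab}$. $P$-convexity then gives $\seg{ab}\subseteq A$. Hence $A\cap\ell$ contains the straight segment between any two of its points, so it is convex along $\ell$ and in particular connected.

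For (ii) $\Rightarrow$ (i) I would first isolate the following consequence of the segment condition, which will be the workhorse: if $a,b\in A$ are mutually visible in $P$ (that is, $\seg{ab}\subseteq P$), then $\seg{ab}\subseteq A$; call this $(\ast)$. Indeed, taking $\ell=\seg{ab}$, the set $A\cap\ell$ is connected and contains both endpoints of $\ell$, hence equals all of $\seg{ab}$. I would then prove $\pi_2(p,q)\subseteq A$ for all $p,q\in A$ by induction on the number $k$ of interior vertices of $\pi_2(p,q)$ (each such vertex is a reflex vertex of $P$). When $k=0$ the path is the single segment $\seg{pq}\subseteq P$, and $(\ast)$ settles the base case. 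For the inductive step it suffices to show that the first turning vertex $v_1$ lies in $A$: then $\seg{pv_1}\subseteq A$ by $(\ast)$, the subpath $\pi_2(v_1,q)$ has only $k-1$ interior vertices and lies in $A$ by the induction hypothesis (using that $A$ is path-connected, so $v_1$ and $q$ are joined within $A$), and concatenating gives $\pi_2(p,q)\subseteq A$.

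The heart of the proof -- and the step I expect to be the main obstacle -- is therefore showing $v_1\in A$, for which my plan is a trapping argument. Using path-connectedness, fix a simple path $\gamma\subseteq A$ from $p$ to $q$, and let $\Omega$ be the region bounded by $\gamma$ and $\pi_2(p,q)$. Because the geodesic bends around the polygonal obstacle at $v_1$ while $\gamma$ must detour on the far side, the interior angle of $\Omega$ at $v_1$ exceeds $\pi$; hence there is a line $L$ through $v_1$ whose two opposite rays both enter $\intr\Omega$. Following each ray from $v_1$, I claim it first leaves $\Omega$ through $\gamma$ rather than through the geodesic: if such a transversal chord met $\pi_2(p,q)$ again at a point $w$, then $\seg{v_1w}\subseteq P$ would force the geodesic subpath between $v_1$ and $w$ to be the straight chord $\seg{v_1w}$, contradicting the fact that the geodesic turns at $v_1$. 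Thus both rays meet $\gamma\subseteq A$, yielding points $a,b\in A$ on opposite sides of $v_1$ with $\seg{ab}\subseteq\Omega\subseteq P$; applying the segment condition to $\ell=\seg{ab}$ then places $v_1\in\seg{ab}\subseteq A$.

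The delicate points I anticipate are twofold. First, justifying the ``angle exceeds $\pi$'' claim and the existence of the two-sided line $L$ cleanly, which amounts to pinning down on which side of the geodesic the obstacle and $\gamma$ respectively lie near $v_1$. Second, handling the case where $\gamma$ and the geodesic meet away from $p$ and $q$, so that $\Omega$ is not a single Jordan region; I would cope with this either by choosing $\gamma$ carefully or by restricting attention to the face of the planar subdivision $\gamma\cup\pi_2(p,q)$ incident to $v_1$. The transversality observation via the shortest-path property is the key idea that makes the trapping, and hence the whole reduction to $(\ast)$, go through.
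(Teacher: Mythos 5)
Your proposal is correct in outline, and your (i)~$\Rightarrow$~(ii) direction is essentially identical to the paper's. For (ii)~$\Rightarrow$~(i) the two arguments share the same engine --- trap the offending geodesic vertex on a chord of $P$ that meets $A$ on both sides of it, then invoke the segment condition --- but they execute it quite differently. You induct on turning vertices and, for the first one $v_1$, obtain the trapping chord from a generic transversal line inside the $(>\pi)$-wedge of the geodesic at $v_1$, arguing that both rays must leave the region between the geodesic and a connecting path $\gamma \subseteq A$ through $\gamma$. The paper needs neither the induction nor the topological bookkeeping you flag as delicate: it takes $v$ to be the \emph{first} vertex of $\pi_2(p,q)$ not in $A$ (so the preceding path vertex $u$ already lies in $A$), extends the edge $\seg{uv}$ beyond $v$ until it hits $\bd P$ at a point $v'$, and observes that the diagonal $\seg{vv'}$ separates $p$ from $q$ in $P$; any path $\pi \subseteq A$ from $p$ to $q$ must therefore cross $\seg{vv'}$, and the three collinear points $u \in A$, $v \notin A$, and the crossing point (in $A$) disconnect $A \cap \seg{uv'}$. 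In effect the paper's chord is your transversal specialized to the incoming edge direction, which makes one of your two trapping points come for free and replaces your face analysis by the standard separation property of the chord $\seg{vv'}$. Your flagged difficulties are real but surmountable: since the exterior wedge of $P$ at $v_1$ lies in the $(<\pi)$-side of the geodesic and connects to the unbounded face of $\gamma \cup \pi_2(p,q)$, the face incident to $v_1$ on the $(>\pi)$-side is bounded and contained in $P$, so your rays cannot escape through $\bd P$ and must indeed exit through $\gamma$. Still, the paper's construction is the cleaner way to close exactly these gaps, and if you want to keep your write-up self-contained I would swap your trapping step for it.
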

\newcommand{\pfLemPConv}{
\begin{proof}
\textit{(i) $\Rightarrow$ (ii).}
Assume that $A$ is $P$-convex.
By definition, any two points $p,q \in A$ are joined by the Euclidean shortest path
$\pi_2(p,q)$ which lies in $A$, that is, $A$ should be path-connected.
Now, suppose that there exists a line segment $\ell\subset P$ such that $A\cap \ell$
consists of two or more connected components.
Then, there are two points $p, q\in A\cap \ell$ such that
the segment $\seg{pq}$ is not completely contained in $A$.
Since $\seg{pq}\subseteq \ell \subset P$, we have $\pi_2(p, q) = \seg{pq}$.
This contradicts the $P$-convexity of $A$.

\begin{figure}[tb]
  \center
  \includegraphics[width=.60\textwidth]{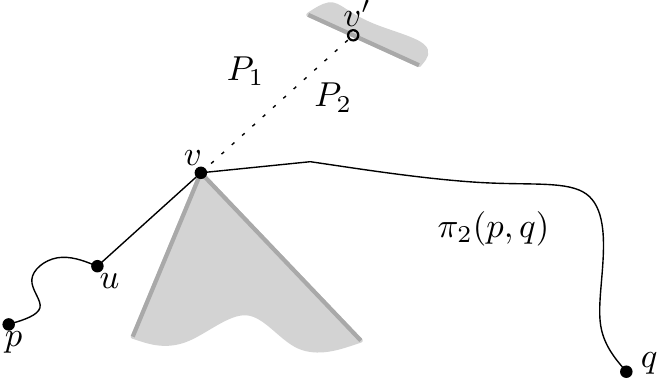}
  \caption{Illustration of the proof of Lemma~\ref{lem:P-convex}:
    the region shaded by gray color is outside $P$, that is, $\Plane \setminus P$. }
  \label{fig:pconvex}
\end{figure}

\textit{(ii) $\Rightarrow$ (i).}
Assume that $A$ is path-connected and any line segment intersects $A$ in a connected subset.
For a contradiction, suppose that there exist $p, q\in A$ such that $\pi_2(p,q)\not \subseteq A$.
%; so, $A$ is not $P$-convex.
Recall that $\pi_2(p,q)$ is a sequence of line segments
whose endpoints are either vertices of $P$, $p$, or $q$~\cite{ghlst-ltavspptsp-87}.
In particular, there exists a vertex of $P$ that is in $\pi_2(p,q)$ but not in $ A$. Let $v$ be the first such vertex, and let $u$ be the vertex of $\pi_2(p,q)$ just before $v$ (note that $u$ is either $p$ or a corner of $P$). By construction, $u$ and $v$ are adjacent through a segment of $\pi_2(p, q)$ and $v$ is farther from $p$ than $u$. By our choice of $v$, we must have $u\in A$ (see \figurename~\ref{fig:pconvex} for an illustration).

Let $v' \in \bd P$ be the first point on $\bd P$ hit by the ray starting at $v$
in the direction opposite to $u$.
%Let $\ell := \seg{vv'}$.
Since $\seg{vv'}$ is a diagonal of $P$ (that is, $v, v'\in\bd P$ and $\seg{vv'} \subset P$),
$P$ is partitioned by $\seg{vv'}$ into two simple polygons. Let $P_1$ be the sub-polygon containing $p \in P_1$; we claim that $q$ must be in the other one (which we call $P_2$). Observe that $\pi_2(p,q)$ properly crosses $\seg{vv'}$ at $v$. Indeed, if $q\in P_1$, the path $\pi_2(p,q)$ must properly cross again the segment $\seg{vv'}$. However, in such a case we can find a shorter path that avoids the segment $\seg{vv'}$ altogether, a contradiction. Thus, we conclude that $p\in P_1$ and $q\in P_2$.

%Since any point $x\in P_1$ sufficiently close to $v$ can be joined with $u$ by $\seg{ux}$, if $\pi_2(p,q)$ turns to $P_1$ at $v$, then there exists a shorter path from $p$ to $q$, a contradiction. Second, observe that $\pi_2(p,q)$ never crosses $\seg{vv'}$ again after $v$.This implies that $q \in P_2$.

Since $A$ is path-connected, there exists a path $\pi$ between $p$ and $q$ such that $\pi \subseteq A$. Since $P$ is a simple polygon,
the path $\pi$ must cross $\seg{vv'}$, and thus $\seg{vv'} \cap A \neq \emptyset$.
Now, consider the segment $\seg{uv'}$.
As observed above, we have $u\in A$, $v\notin A$, and $\seg{vv'} \cap A \neq \emptyset$. In particular, $A\cap \seg{uv'}$ is not connected, a contradiction.
\end{proof}
}
\ShoLong{}{\pfLemPConv}

We are interested in the boundary of a $P$-convex set.
Let $A \subseteq P$ be a $P$-convex set.
%Note that $A$ is simply connected by definition.
%$A$ is path-connected and any closed curve in $A$ is contractible.
%Consider the boundary $\bd A$ of $A$.
%Since $A$ is simply connected, $\bd A$ appears to be a closed curve,
%which may intersect $\bd P$.
Consider any convex subset $Q \subseteq P$.
Since $\pi_2(p,q) = \seg{pq}$ for any $p, q\in Q$,
the intersection $A \cap Q$ is also a convex set due to the $P$-convexity of $A$.
Based on this observation, we show the following lemma.
\begin{lemma} \label{lem:pconvex-bdconvex}
 Let $A\subseteq P$ be a closed $P$-convex set.
 Then, any connected component $C$ of $\bd A \cap \intr P$
 is a convex curve.
 %being either open or closed.
% More precisely, for any $p,q\in C$, the segment $\seg{pq}$ lies inside $A$
% if $\seg{pq} \subset P$.
\end{lemma}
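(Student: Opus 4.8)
The plan is to derive the convexity of $C$ from a purely local property: at every point of $C$ the set $A$ looks locally convex, and this works precisely because $C$ lies in $\intr P$, where honest Euclidean disks serve as convex neighborhoods. First I would fix a point $x \in C$. Since $x\in\intr P$, choose $\rho>0$ small enough that the open disk $D_x$ of radius $\rho$ about $x$ satisfies $D_x\subseteq\intr P$. As $D_x$ is convex and contained in $P$, the observation preceding the lemma (that $A\cap Q$ is convex for every convex $Q\subseteq P$, since $\pi_2(p,q)=\seg{pq}$ there) shows that $A\cap D_x$ is convex. Because $A$ is closed and $D_x$ is open, we have $\bd A\cap D_x = \bd(A\cap D_x)\cap D_x$, so inside $D_x$ the boundary of $A$ is exactly the boundary of the convex body $A\cap D_x$. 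In particular $C\cap D_x$ is an arc of the boundary of a convex set, $A\cap D_x$ lies on its inner side, and there is a supporting line $\ell_x$ through $x$. Thus $C$ is locally convex at each of its points, with $A$ on one definite side.

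Next I would upgrade this local convexity to the global statement. Since $A$ is a single set, the side on which $A$ lies is the same all along $C$, so the local convex arcs fit together with a consistent orientation; moreover $C$ cannot cross itself, as a self-intersection point $y$ would force $A\cap D_y$ to fail convexity. A connected, simple, locally convex planar curve on which the body lies on one fixed side is a convex curve: if $C$ is a closed curve contained in $\intr P$ it bounds a convex region, and if $C$ is an arc then its two endpoints must lie on $\bd P$ (these are the only places where the interior-disk argument can break down), and $C$ coincides with part of the boundary of $\conv(C)$. Either way $C$ is a convex curve.

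The main obstacle is exactly this local-to-global passage. A convenient way to make it airtight, avoiding an appeal to the planar local-to-global convexity theorem (Tietze--Nakajima), is to rule out inward dents directly using Lemma~\ref{lem:P-convex}. Suppose $C$ bent the wrong way at some interior point $x$; then one could pick $p,q\in A$ lying on the two sides of the dent, close enough to $x$ that $\seg{pq}\subset\intr P$ while the midpoint of $\seg{pq}$ lies outside $A$. In that case $A\cap\seg{pq}$ would be disconnected, contradicting condition (ii) of Lemma~\ref{lem:P-convex}. Hence $A$ has no inward dent along $C$, which is precisely the convexity we want. Finally I would dispatch the degenerate cases separately: if $A$ has empty interior (for instance $A$ is a single point or a segment) then $\bd A\cap\intr P$ is itself a point or a subsegment, and is trivially a convex curve.
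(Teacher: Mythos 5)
Your proof takes essentially the same route as the paper's: both reduce the claim to the observation that $A\cap Q$ is convex for every convex $Q\subseteq P$, apply it to small convex neighborhoods of points of $C$ to obtain local convexity with $A$ on a fixed side (the paper organizes this via the triangles of a triangulation of $P$ together with small disks at the finitely many points where $C$ crosses a diagonal, whereas you use small disks at every point), and then patch the local pictures into global convexity of $C$. The paper's local-to-global step is carried out at the same level of rigor as yours, so your argument matches the paper's own standard; the only caveat is that your closing ``inward dent'' argument via Lemma~\ref{lem:P-convex}(ii) merely re-derives local convexity and does not independently strengthen the passage from local to global convexity.
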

\newcommand{\pfLemBCConvex}{
\begin{proof}
 Consider an arbitrary triangulation $\mathcal{T}$ of the simple polygon $P$.
 We call each edge in $\mathcal{T}$ a \emph{diagonal} if it is not an edge of $P$.
 For each triangle $\triangle \in \mathcal{T}$, $A \cap \triangle$ is a convex set
 since $A$ is $P$-convex and $\triangle$ is convex.
 Therefore, $\bd (A \cap \triangle)$ is convex, implying that every (maximal) continuous portion of $\bd A \cap \triangle$ is a convex curve (note that $\bd A \cap \triangle$ may have more than one continuous portion).
 Now, consider any connected component $C$ of $\bd A \cap \intr P$.
 The diagonals of $\mathcal{T}$ subdivides $C$ into some pieces,
 each of which is a convex curve, as discussed above.
 Since the vertices of each triangle in $\mathcal{T}$ are those of $P$,
 $C$ does not pass through any vertex of any triangle by definition.
 This implies that $C$ is a simple curve, being either open or closed.

 In order to show that $C$ is a convex curve,
 it suffices to check each intersection point $p$ between $C$ and a triangle diagonal $e$.
 If such a point $p$ does not exist, then $C$ does not intersect any diagonal, and we are done.
 Otherwise, the diagonal $e$ is incident to two triangles $\triangle_1$ and $\triangle_2$ of $\mathcal{T}$.  Let $\alpha_1$ be the continuous portion of $C \cap \triangle_1$ that connects $p$ and let $\alpha_2$ be the continuous portion of $C \cap \triangle_2$ that connects $p$. According to our discussion above, both $\alpha_1$ and $\alpha_2$ are nonempty convex curves.
 Let $U$ be a sufficiently small disk centered at $p$ such that $(\alpha_1\cup\alpha_2)\cap U$ is part of the boundary of $A\cap U$. By the $P$-convexity of $A$, $A \cap U$ must also be a convex set,
 and therefore, $(\alpha_1\cup\alpha_2)\cap U$, as a part of $\bd (A \cap U)$, is a convex curve. This also shows that $\alpha_1\cup \alpha_2$ is convex.
 Since there are only a finite number of such intersection points $p$,
 repeating this argument for each such $p$ proves that $C$ is a convex curve.
\end{proof}
}

\ShoLong{}{\pfLemBCConvex}

Note that if a connected component $C$ of $\bd A \cap \intr P$ is
not a closed curve, then $C$ is an open curve excluding its endpoints,
which lie on $\bd P$.
This implies that the curve $C$ divides $P$ into two connected components
such that $\intr A$ lies on one side of $C$, regardless of whether $C$ is open or closed.

\subsection{Geometric properties of $L_1$ geodesic balls}
%We focus on the $P$-convexity of the $L_1$ geodesic balls.
In the following, we establish several geometric properties of geodesic balls $\geoball_s(r)$ that follow from the $P$-convexity of $\geoball_s(r)$.
Note that, to the best of our knowledge, most of these properties of $\geoball_s(r)$ have not been discussed before in the literature.

We start with a simple observation.
By Fact~\ref{fact:euc_simple}, $\pi_2(s, p)$ is an $L_1$ shortest path from $s$ to $p\in P$.
Since $\pi_2(s, p)$ makes turns only at vertices of $P$,
the ball is equal to the union of some $L_1$ balls centered at the vertices of $P$.
More precisely, $\geoball_s(r) = \bigcup_{v \in V \cup \{s\}} \geoball_v(r - d(s,v))$,
where $V$ denotes the set of vertices of $P$.
This immediately implies the following observation.
\begin{observation} \label{obs:geoball1}
 For any $s\in P$ and $r > 0$,
 the geodesic ball $\geoball_s(r)$ is a simple polygon in $P$
 and each side of $\geoball_s(r)$ either lies on $\bd P$ or has slope $1$ or $-1$.
\end{observation}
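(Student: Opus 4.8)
The plan is to read off the structure of $\geoball_s(r)$ directly from the shortest path map $\spm(s)$, which by Fact~\ref{fact:euc_simple} serves equally well for $L_1$ shortest paths. Recall that $\spm(s)$ subdivides $P$ into finitely many cells such that every point $x$ in a fixed cell has its shortest path $\pi_2(s,x)$ reaching $x$ through the same last vertex $v \in V \cup \{s\}$, i.e. the final leg is the straight segment $\seg{vx} \subset P$. A straight segment is monotone, so Fact~\ref{fact:l1length} gives that $|\seg{vx}|$ equals the $L_1$ distance from $v$ to $x$, and hence inside that cell the distance function takes the simple form $d(s,x) = d(s,v) + |\seg{vx}|$, namely a constant plus an $L_1$ norm.

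First I would use this to settle the slope claim. Inside a cell with root $v$, the part of the ball is $\{x : d(s,v) + |\seg{vx}| \le r\}$ clipped to the cell, which is the cell intersected with the planar $L_1$ disk of radius $r - d(s,v)$ centered at $v$; the boundary of such a disk is a square whose four sides have slope $+1$ or $-1$. Thus $\geoball_s(r)$ is a finite union of polygonal pieces and is therefore a polygonal region with finitely many sides. To identify those sides, I would note that $d(s,\cdot)$ is continuous on the compact set $P$, so $\geoball_s(r) = \{x \in P : d(s,x) \le r\}$ is closed and every boundary point lying in $\intr P$ satisfies $d(s,x) = r$; that is, $\bd\geoball_s(r)\cap\intr P$ is a piece of the level set $\{d(s,\cdot) = r\}$. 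Within each cell this level set is an arc of the corresponding $L_1$ circle, hence a union of segments of slope $\pm 1$, and because $d(s,\cdot)$ is continuous across cell boundaries these arcs join up without any interior diagonal of $\spm(s)$ surviving on $\bd\geoball_s(r)$. Every remaining side of $\geoball_s(r)$ therefore lies on $\bd P$, as claimed.

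It then remains to show that $\geoball_s(r)$ is a \emph{simple} polygon, i.e. connected and hole-free. For this I would invoke the $P$-convexity of $\geoball_s(r)$: it is path-connected because each $x$ is joined to $s$ by $\pi_2(s,x)$, along which the distance to $s$ is monotone and so never exceeds $r$, and its intersection with any segment $\ell \subset P$ is connected because $d(s,\cdot)$ is quasiconvex along $\ell$. By Lemma~\ref{lem:P-convex} these two properties make $\geoball_s(r)$ $P$-convex, and since $r > 0$ it has nonempty interior. Lemma~\ref{lem:pconvex-bdconvex} together with the subsequent remark then guarantees that every connected component of $\bd\geoball_s(r) \cap \intr P$ is a convex curve splitting $P$ into two pieces with $\intr\geoball_s(r)$ on one side; this excludes holes and forces $\bd\geoball_s(r)$ to be a single simple closed polygonal curve, so $\geoball_s(r)$ is a simple polygon.

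The main obstacle I anticipate is precisely the behaviour at the diagonals of $\spm(s)$: I must verify that the slope-$\pm1$ arcs contributed by neighbouring cells meet continuously so that no interior cell edge appears on the boundary, and, relatedly, that the sublevel set contains no holes. Both reduce to the quasiconvexity of $d(s,\cdot)$ along segments, equivalently to the $P$-convexity of the geodesic ball, which carries the real content of the statement; once that is in hand, Lemmas~\ref{lem:P-convex} and~\ref{lem:pconvex-bdconvex} close the argument cleanly.
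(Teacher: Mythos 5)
Your first half is fine and is essentially the paper's own (one-line) justification: the paper observes that $\geoball_s(r)=\bigcup_{v\in V\cup\{s\}}\geoball_v(r-d(s,v))$, so that locally the ball is a planar $L_1$ disk clipped to a visibility cell; your rephrasing in terms of the cells of $\spm(s)$, where $d(s,x)=d(s,v)+|\seg{vx}|$ for the root $v$, yields the slope-$\pm1$ claim and the polygonality in the same way. Your path-connectedness argument (the distance is non-decreasing along $\pi_2(s,x)$, so the whole path stays in the ball) is also correct and independent of anything else.

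The gap is in how you establish that the ball is \emph{simple}, and it is a circularity rather than a local slip. You rule out holes by invoking (i) quasiconvexity of $d(s,\cdot)$ along an \emph{arbitrary} segment $\ell\subset P$, to get connected intersections and hence $P$-convexity via Lemma~\ref{lem:P-convex}, and (ii) Lemma~\ref{lem:pconvex-bdconvex} applied to the ball. But in this paper quasiconvexity along arbitrary segments is Corollary~\ref{coro:dist_seg}, which follows from Corollary~\ref{coro:geoball_seg}, which follows from the $P$-convexity of the ball (Lemma~\ref{lem:geoball_pconvex}) --- and the proof of Lemma~\ref{lem:geoball_pconvex} cites Observation~\ref{obs:geoball1} twice (to know that a segment meets the ball in finitely many components, and that the interior boundary components are $\pm1$-sloped polylines). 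The only segment-convexity available \emph{before} the observation is Lemma~\ref{lem:dist_seg_horizontal}, which covers horizontal and vertical segments only; your closing remark that everything ``reduces to the $P$-convexity of the geodesic ball'' concedes exactly the point that cannot yet be used. A non-circular fix for hole-freeness: for any $x\in P$ with $d(s,x)>r$, extend the last edge of $\pi_2(s,x)$ beyond $x$ until it first hits $\bd P$ at a point $x'$; then $\pi_2(s,y)=\pi_2(s,x)\cup\seg{xy}$ for every $y\in\seg{xx'}$, so $d(s,y)\geq d(s,x)>r$ along the whole extension, which therefore connects $x$ to $\bd P$ (and thence to the unbounded component of the complement) entirely outside the closed ball. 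Hence no bounded component of the complement of $\geoball_s(r)$ is trapped in $\intr P$, and together with your path-connectedness argument this gives simplicity without appealing to Lemmas~\ref{lem:P-convex} and~\ref{lem:pconvex-bdconvex} or Corollary~\ref{coro:dist_seg}.
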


\begin{figure}[tb]
  \center
  \includegraphics[width=\textwidth]{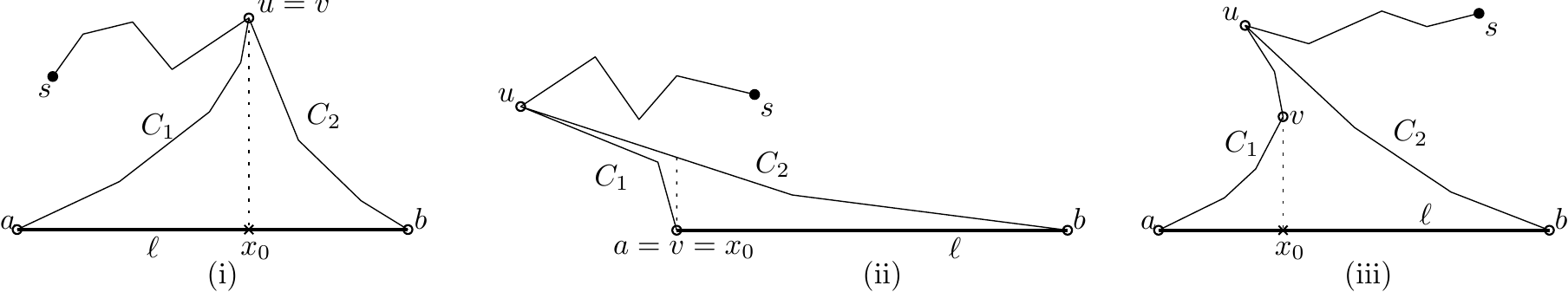}
  \caption{Illustration of the proof of Lemma~\ref{lem:dist_seg_horizontal}:
    (i) When both chains $C_1$ and $C_2$ of the funnel are monotone and $v$ is visible to $\ell$.
    (ii) When both chains $C_1$ and $C_2$ of the funnel are monotone and $v$ is to the left of $\ell$.
    (iii) When $C_1$ is not monotone. }
  \label{fig:dist_seg_horizontal}
\end{figure}

%In order to show the $P$-convexity, we observe the following. %based on Fact~\ref{fact:l1length}.
%Let $s\in P$ be any point and $\ell \subset P$ be a horizontal line segment.
%We are interested in the $L_1$ geodesic distance $d(s, x)$ from $s$ to any $x\in \ell$.
\begin{lemma}  \label{lem:dist_seg_horizontal}
 Given a point $s\in P$ and a horizontal or vertical line segment $\ell \subset P$,
 the function $f(x) = d(s, x)$ over $x\in \ell$ is convex.
\end{lemma}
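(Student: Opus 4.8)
The plan is to reduce the statement to showing that $f$ is a \emph{unimodal} piecewise-linear function each of whose pieces has slope $+1$ or $-1$; for such a function unimodality is equivalent to convexity, because a decreasing-then-increasing function whose slopes all lie in $\{-1,+1\}$ automatically has its slopes in sorted order $-1,\dots,-1,+1,\dots,+1$, which is exactly convexity. Assume without loss of generality that $\ell$ is horizontal (the vertical case is symmetric) and parametrise it by the horizontal coordinate $t$, writing $x(t)=(t,y_0)\in\ell$. Overlaying $\ell$ with the shortest path map $\spm(s)$ cuts the parameter interval into finitely many subintervals, one for each cell of $\spm(s)$ that $\ell$ meets. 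On the subinterval corresponding to a cell with root $w=(w_1,w_2)$, Fact~\ref{fact:euc_simple} gives $f(x)=d(s,w)+|\seg{wx}|=d(s,w)+|t-w_1|+|y_0-w_2|$, so $f$ is linear in $t$ with slope $+1$ (when $t>w_1$) or $-1$ (when $t<w_1$). Thus $f$ has the required unit-slope piecewise-linear form, and it remains only to prove unimodality, i.e.\ that $f$ has no interior local maximum.

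For the unimodality I would work with the \emph{funnel} $\funnel$ of $s$ with respect to $\ell$: the union of the shortest paths $\pi_2(s,x)$ over $x\in\ell$, bounded by $\pi_2(s,a)$, the segment $\ell$, and $\pi_2(s,b)$. Let $v$ be the apex of $\funnel$ (the last vertex common to $\pi_2(s,a)$ and $\pi_2(s,b)$) and let $C_1,C_2$ be the two chains from $v$ to the endpoints of $\ell$. Since $d(s,v)$ is a constant, it suffices to prove that $g(t):=d(s,x(t))-d(s,v)$, the geodesic distance inside $\funnel$ from the apex $v$ to $x(t)$, is convex. For any $x\in\ell$ this funnel distance is realised by a path that hugs one chain up to its last vertex (the root $w$) and then runs straight from $w$ to $x$.

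The core of the argument is a case analysis on the shape of the chains, exactly as depicted in Figure~\ref{fig:dist_seg_horizontal}. When both chains are monotone (cases (i) and (ii)), one checks that the path from $v$ which hugs a monotone concave chain and then shoots to the extreme point $x\in\ell$ is itself monotone in the sense of the definition preceding Fact~\ref{fact:l1length}; that Fact then collapses its length to that of the straight segment, giving $g(t)=|\seg{vx}|=|t-v_1|+|y_0-v_2|$. This is a V-shaped, hence convex, function: its minimum sits at $t=v_1$ when $v$ is visible from $\ell$ (case (i)), while it reduces to a single monotone branch when $v$ lies entirely to one side of $\ell$ (case (ii)). The distinction between the two sub-cases only serves to locate the vertex of the V relative to the interval spanned by $\ell$; in either case $g$, and hence $f$, is convex.

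The main obstacle is case (iii), where a chain (say $C_1$) fails to be monotone. Then the path from $v$ to a point served by a vertex lying beyond the chain's turn-around doubles back, Fact~\ref{fact:l1length} no longer applies to the whole path, and $g$ is no longer given by the single formula above. My plan here is to cut the chain at the points where the horizontal or vertical component of its edge directions changes sign, splitting it into maximal monotone sub-chains; on the $t$-interval served by each such sub-chain the argument of the previous paragraph still produces a convex unit-slope piece, and the fact that a funnel chain is concave (its edges turn consistently in one direction) should force successive pieces to meet with non-decreasing slope, which is precisely unimodality. Verifying that the slope never drops across these junctions — equivalently, that no point of $\ell$ is served in a way that creates an interior local maximum of $f$ — is the delicate step, and it is exactly where the concavity of the funnel chains must be exploited in full.
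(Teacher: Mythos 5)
Your reduction of convexity to unimodality for a piecewise-linear function whose slopes all lie in $\{-1,+1\}$ is sound, and your treatment of the cases where both funnel chains are monotone reproduces the paper's argument: the path from the apex that hugs a monotone chain and then shoots to $x$ is itself monotone, Fact~\ref{fact:l1length} collapses its $L_1$ length to that of the straight segment, and $f$ is a shifted V. The problem is that you have explicitly left unproved the one case that carries the real content of the lemma, namely your case (iii), where a chain $C_1$ of the funnel is not monotone. The plan you describe --- split $C_1$ into maximal monotone sub-chains and hope that ``concavity should force successive pieces to meet with non-decreasing slope'' --- is not a proof, and the junction analysis it calls for is genuinely delicate: the slope of $f$ on the interval of $\ell$ served by a given sub-chain depends on whether the serving vertex lies to the left or to the right of the served point, and ruling out a $+1\to-1$ transition at a hand-off between sub-chains requires locating exactly where on $\ell$ that hand-off occurs, which your decomposition does not pin down.

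The paper closes this case with one additional idea that your sketch is missing: take $v$ to be the \emph{rightmost} vertex of the non-monotone chain $C_1$ and let $x_0\in\ell$ be its vertical projection. Then (a) $v$ is visible from $x_0$ because $v$ splits $C_1$ into two monotone concave pieces; (b) for $x$ to the right of $x_0$ the concatenation of $\seg{xx_0}$, $\seg{x_0v}$, and the portion of $C_1$ from $v$ back to the apex is a monotone path, hence a shortest path by Fact~\ref{fact:l1length}; and (c) for $x$ to the left of $x_0$ every shortest path must wrap around $C_1$ through $v$ in any event. Consequently every $x\in\ell$ admits a shortest path passing through $x_0$, which yields $f(x)=f(x_0)+|\seg{x_0x}|$ on all of $\ell$ --- convexity in one line, with no junction analysis at all. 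You should either import this routing-through-$x_0$ argument or actually carry out the slope comparison at each junction; as written, the proposal stops exactly where the proof has to start doing work.
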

\begin{proof}
Without loss of generality, we assume that $\ell$ is horizontal.
The case where $\ell$ is vertical can be handled in a symmetric way.
Consider the union of all Euclidean shortest paths $\pi_2(s, x)$ from $s$ to $x$ over $x\in \ell$,
which forms a \emph{funnel} $F$ with apex $u$ and base $\ell$ plus $\pi_2(s,u)$.
The funnel $F$ consists of two concave chains $C_1$ and $C_2$ through vertices of $P$ and
the endpoints of $\ell$
so that $C_1$ connects the apex $u$ and the left endpoint of $\ell$ and
$C_2$ connects $u$ and the right endpoint of $\ell$.
See \figurename~\ref{fig:dist_seg_horizontal}.
Note that the apex $u$ is also a vertex of $P$ unless $u=s$.

Each of the two concave chains $C_1$ and $C_2$ is
either monotone or not.
Recall that a path is called monotone if and only if any vertical or horizontal line intersects
it at most once.
Observe that at least one of them must be monotone,
since the apex $u$ must be visible to a point of $\ell$.
Without loss of generality, we assume that
$C_2$ is monotone.
Let $a$ and $b$ be the left and right endpoints of $\ell$, respectively.
For any point $p$, let $p_x$ denote its $x$-coordinate.
We define a particular vertex $v$ of $F$ defined as follows:
(1) $v = u$ if both chains are monotone and $a_x\leq u_x\leq b_x$ (see
\figurename~\ref{fig:dist_seg_horizontal} (i));
(2) $v = a$ if both chains are monotone and $u_x< a_x$ (see
\figurename~\ref{fig:dist_seg_horizontal} (ii));
(3) $v = b$ if both chains are monotone and $b_x< u_x$ (symmetric to \figurename~\ref{fig:dist_seg_horizontal} (ii));
(4) $v$ is the rightmost vertex of $C_1$ if $C_1$ is not monotone (see
\figurename~\ref{fig:dist_seg_horizontal} (iii)).

Note that, by construction of $v$, it always holds that $a_x\leq v_x\leq
b_x$. Let $x_0 \in \ell$ be the point of $\ell$ that has the same $x$-coordinate as $v$.
We first prove that $v$ is visible to $x_0$. In case (1) both
chains $C_1$ and $C_2$ are monotone and $\ell$ is horizontal. Thus, the segment connecting $v$ and $x_0$ cannot be obstructed. In cases (2) and (3) we have $x_0=v$ (and thus the statement is
obviously true). For case (4)  the vertex $v$ cuts $C_1$ into two monotone concave chains.
Since $\ell$ is horizontal and $u$ is visible to at least one point of $\ell$,
$v$ must be visible to $x_0$.

We claim that for any $x\in \ell$ there exists a shortest path to $u$ that passes through $x_0$. Consider the path formed by concatenating the (horizontal) segment $\seg{xx_0}$, the (vertical) segment $\seg{x_0v}$ and the shortest path from $v$ to $u$. Recall that we considered four different types of funnels. In most of these cases, the resulting path is monotone (and thus it is a shortest path by Fact~\ref{fact:l1length}). The only situation in which the resulting path is not monotone is if the funnel is in case (4), and $x$ is to the left of $x_0$. However, in this situation we know that $\pi_2(s,x)$ always passes through $v$.

That is, for any $x\in \ell$ there exists a shortest path to $u$ that passes through $x_0$ (and thus to $s$ as well, since all shortest paths from points of $\ell$ pass through $u$). In particular, we have $d(s,x)=|\pi_2(s,x_0)| + |\seg{x_0x}|$ which is a convex function over $x\in \ell$ as claimed.
\end{proof}

We are ready to prove the $P$-convexity of any $L_1$ geodesic ball.
\begin{lemma} \label{lem:geoball_pconvex}
 For any point $s\in P$ and any real $r\in \Real$,
 the $L_1$ geodesic ball $\geoball_s(r)$ is $P$-convex.
\end{lemma}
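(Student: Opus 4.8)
The plan is to verify the two conditions in Lemma~\ref{lem:P-convex}(ii): that $\geoball_s(r)$ is path-connected and that it meets every line segment in a connected set. Path-connectedness is the easy half. For any $x\in\geoball_s(r)$ and any point $z$ on $\pi_2(s,x)$, the prefix of $\pi_2(s,x)$ from $s$ to $z$ is a path of $L_1$ length at most $d(s,x)\le r$, so $d(s,z)\le r$. Hence $\pi_2(s,x)\subseteq\geoball_s(r)$, and the ball is path-connected (in fact star-shaped via the geodesics emanating from $s$).

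The substance is the segment condition, and the natural route is to upgrade Lemma~\ref{lem:dist_seg_horizontal} from axis-parallel segments to arbitrary ones: I would show that for every segment $\ell\subset P$ the function $f(x)=d(s,x)$ is convex on $\ell$, which makes each sublevel set $\{x\in\ell: d(s,x)\le r\}=\geoball_s(r)\cap\ell$ an interval, hence connected. To this end I reuse the funnel from the proof of Lemma~\ref{lem:dist_seg_horizontal}: the union of the paths $\pi_2(s,x)$ over $x\in\ell$ is a single funnel with one apex $u$ and two concave chains. For each $x\in\ell$ let $t(x)$ be the vertex at which $\pi_2(s,x)$ last turns before reaching $x$ (its tangent vertex on the funnel); since $\seg{t(x)x}\subset P$ is a straight, hence monotone, segment, Fact~\ref{fact:l1length} gives $d(s,x)=d(s,t(x))+|\seg{t(x)x}|$. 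As $x$ traverses $\ell$ the tangent vertex $t(x)$ moves monotonically along the funnel boundary, partitioning $\ell$ into maximal subsegments on each of which $t(x)$ is a fixed vertex $v$ and $f(x)=d(s,v)+|\seg{vx}|$ is convex, the $L_1$ distance from the fixed point $v$ restricted to a line being a sum of absolute values of affine functions.

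It remains --- and this is the crux --- to check that no downward kink of $f$ appears where $t(x)$ switches between two consecutive funnel vertices; this is exactly the step at which axis-parallelness of $\ell$ is lost. At such a breakpoint $x^\ast$ the chain edge $\seg{vv'}$ extends through $x^\ast$, so $v$, $v'$, $x^\ast$ are collinear with $v'$ between $v$ and $x^\ast$; since $\seg{vv'}$ lies on a shortest path we have $d(s,v')=d(s,v)+|\seg{vv'}|$, which makes $f$ continuous at $x^\ast$, while because $v-x^\ast$ and $v'-x^\ast$ point in the same direction the one-sided derivatives of $|\seg{vx}|$ and $|\seg{v'x}|$ at $x^\ast$ coincide. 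Thus $f$ has matching slopes across every tangent breakpoint, and its only genuine kinks are the upward ones produced by the $L_1$ norm inside a single piece; hence $f$ is convex on all of $\ell$. Combining this with path-connectedness and invoking Lemma~\ref{lem:P-convex} yields the $P$-convexity of $\geoball_s(r)$. The main obstacle is precisely this breakpoint analysis for a non-axis-parallel $\ell$, where Lemma~\ref{lem:dist_seg_horizontal} no longer applies directly and one must exploit the collinearity forced by the concavity of the funnel chains.
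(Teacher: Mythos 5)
Your proof is correct, but it takes a genuinely different route from the paper's. The paper argues by contradiction: assuming $\geoball_s(r)$ is not $P$-convex, Lemma~\ref{lem:P-convex} together with Observation~\ref{obs:geoball1} (the ball is a simple polygon whose boundary edges in $\intr P$ have slope $\pm 1$) produces a reflex corner of $\bd \geoball_s(r)$ in $\intr P$ whose incident edges have slopes $1$ and $-1$, and a short \emph{axis-parallel} segment placed near that corner then meets the ball in two components, contradicting Lemma~\ref{lem:dist_seg_horizontal}. So the paper only ever needs convexity of $d(s,\cdot)$ on horizontal and vertical segments, and pays for that economy with a topological step showing the two crossing points lie on a common component of $\bd\geoball_s(r)\cap\intr P$. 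You instead upgrade Lemma~\ref{lem:dist_seg_horizontal} to arbitrary segments by a full funnel/tangent-vertex analysis, and your breakpoint computation is sound: at a transition $x^\ast$ the collinearity of $v$, $v'$, $x^\ast$ (with $v'$ between $v$ and $x^\ast$) together with $d(s,v')=d(s,v)+|\seg{vv'}|$ makes the two expressions $d(s,v)+|\seg{vx}|$ and $d(s,v')+|\seg{v'x}|$ agree not merely at $x^\ast$ but on a whole neighborhood of $x^\ast$ in $\ell$, since $x-v$ and $x-v'$ have the same coordinatewise sign pattern there; hence $f$ is locally convex everywhere on $\ell$ and therefore convex. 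What you gain is the stronger statement that $d(s,\cdot)$ is convex on \emph{every} segment in $P$ --- precisely the claim the paper asserts after Corollary~\ref{coro:dist_seg} but declines to prove --- and you bypass both Observation~\ref{obs:geoball1} and the contradiction argument; what you pay is the more delicate analysis at tangent transitions, which the paper's reduction to axis-parallel segments is designed to avoid.
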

\begin{proof}
The case where $r\leq 0$ is trivial, so assume $r>0$. Suppose that $\geoball_s(r)$ is not $P$-convex.
Since $\geoball_s(r)$ is a simple polygon (Observation~\ref{obs:geoball1}),
any line segment in $P$ intersects $\geoball_s(r)$ in finitely many connected components.
Thus, by Lemma~\ref{lem:P-convex},
there exists a line segment $\ell \subset P$ such that
$\ell$ crosses $\bd \geoball_s(r) \cap \intr P$ exactly twice.
Let $a, b\in \ell$ be the two intersection points such that
$\seg{ab} \setminus\{a, b\}$ lies in $P\setminus \geoball_s(r)$.

\begin{figure}[t]
  \center
  \includegraphics[width=.75\textwidth]{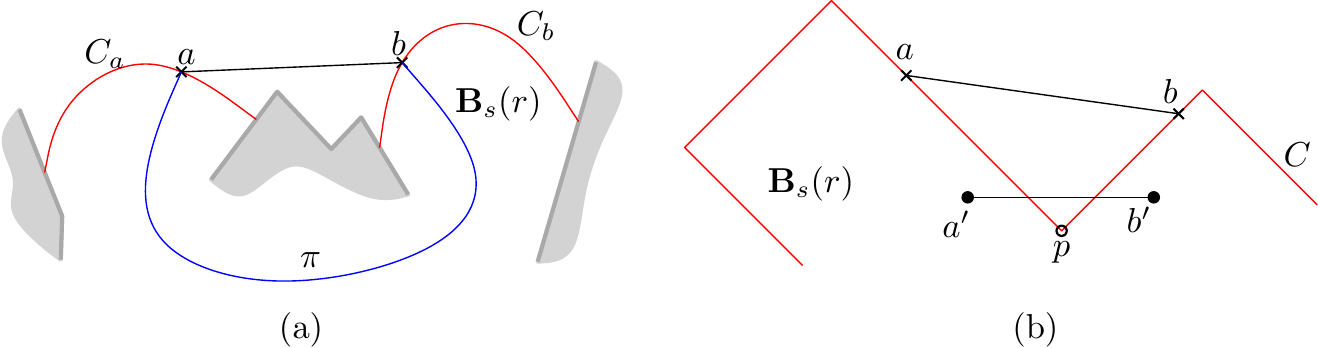}
  \caption{Illustration of the proof of Lemma~\ref{lem:geoball_pconvex}.
  The region shaded by gray depicts $\Plane \setminus P$.}
  \label{fig:geoball_pconvex}
  \vspace{-10pt}
\end{figure}

We then observe that $a$ and $b$ belong to a common connected component $C$
of $\bd \geoball_s(r) \cap \intr P$.
Suppose for a contradiction that $a$ and $b$ belong to different components $C_a$ and $C_b$,
respectively.
See \figurename~\ref{fig:geoball_pconvex}(a).
Since $\geoball_s(r)$ is path-connected and closed,
there exists a path $\pi$ between $a$ and $b$ such that $\pi\subset \geoball_s(r)$.
Consider the simple closed curve $L:=\seg{ab}\cup\pi$, and $L$ is contained in $P$.
Since $C_a \neq C_b$ and $a\in C_a$, $L$ separates the two endpoints of $C_a$,
that is, an endpoint of $C_a$ (which is on $\partial P$) lies in the region bounded by $L$.
However, this is impossible since $P$ is simple, a contradiction.
Hence, both $a$ and $b$ lie in a common connected component $C$
of $\bd \geoball_s(r) \cap \intr P$.

By Observation~\ref{obs:geoball1}, $C$ is a polygonal curve consisting of line segments
with slope $1$ or $-1$.
Since $a, b \in C$ and $\seg{ab}$ is not contained in $\geoball_s(r)$,
$C$ has a reflex corner $p$ incident to two line segments %$l_1$ and $l_2$ of $C$
whose slopes are $1$ and $-1$, respectively.
See \figurename~\ref{fig:geoball_pconvex}(b) for an illustration.
Then, we can find a horizontal or vertical line segment $\seg{a'b'}$ sufficiently close to $p$
such that $a', b'\in \geoball_s(r)$ and $\geoball_s(r) \cap \seg{a'b'}$
consists of two connected components.
%Without loss of generality, we assume that $\seg{a'b'}$ is horizontal.
%(This is possible because the plane under the $L_1$ metric is
%equivalent to the plane rotated by $90^\circ$.)
Take any point $x\in \seg{a'b'} \setminus \geoball_s(r)$.
Since $a', b' \in \geoball_s(r)$ but $x \notin \geoball_s(r)$,
we have a strict inequality $d(s, x) > r \geq d(s, a')$ and $d(s,x) > r \geq d(s, b')$,
a contradiction to Lemma~\ref{lem:dist_seg_horizontal}.
\end{proof}
%
%This implies what we need, with the $P$-Convexity Lemma.
%One of them is the following.
%\textbf{Corollary} \textsl{%
% Given a point $s\in P$ and a line segment $\ell \subset P$,
% the geodesic distance function $d(s, x)$ over all $x\in \ell$
% is quasiconvex.
%}\\
%

The $P$-convexity of the geodesic balls,
together with Lemma~\ref{lem:pconvex-bdconvex} and Observation~\ref{obs:geoball1},
immediately implies the following corollary.
\begin{corollary} \label{coro:geoball_pconvex1}
 For $s\in P$ and $r>0$, each connected component $C$ of $\bd \geoball_s(r) \cap \intr P$
 is a convex polygonal curve consisting of line segments of slope $1$ or $-1$.
\end{corollary}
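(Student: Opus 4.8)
The final statement is Corollary~\ref{coro:geoball_pconvex1}, which asserts that each connected component $C$ of $\bd \geoball_s(r) \cap \intr P$ is a convex polygonal curve whose segments all have slope $1$ or $-1$. The key observation is that this corollary is explicitly advertised as following \emph{immediately} from three already-established results: the $P$-convexity of $\geoball_s(r)$ (Lemma~\ref{lem:geoball_pconvex}), the convexity of boundary components of closed $P$-convex sets (Lemma~\ref{lem:pconvex-bdconvex}), and the structural description of the ball's boundary (Observation~\ref{obs:geoball1}). So the plan is not to prove anything from scratch, but to assemble these three facts.

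First I would note that for $r > 0$, the ball $\geoball_s(r)$ is a closed $P$-convex set: $P$-convexity is exactly the content of Lemma~\ref{lem:geoball_pconvex}, and closedness follows since $\geoball_s(r) = \{x \in P : d(s,x) \le r\}$ is a sublevel set of the continuous geodesic-distance function on the compact set $P$ (it is also a simple polygon by Observation~\ref{obs:geoball1}). Having verified the hypotheses of Lemma~\ref{lem:pconvex-bdconvex}, I can apply it directly with $A = \geoball_s(r)$ to conclude that any connected component $C$ of $\bd \geoball_s(r) \cap \intr P$ is a convex curve. This already gives the ``convex'' part of the claim.

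Second, for the shape of $C$, I would invoke Observation~\ref{obs:geoball1}, which says that each side of the polygon $\geoball_s(r)$ either lies on $\bd P$ or has slope $1$ or $-1$. Since $C$ is a portion of $\bd \geoball_s(r)$ lying in the open interior $\intr P$, none of its constituent segments can lie on $\bd P$; hence every segment of $C$ must have slope $1$ or $-1$. Combined with the convexity established above, this shows $C$ is a convex polygonal curve all of whose edges have slope $\pm 1$, which is precisely the corollary.

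I do not expect a genuine obstacle here, since all the heavy lifting was done in Lemmas~\ref{lem:P-convex}, \ref{lem:pconvex-bdconvex}, and~\ref{lem:geoball_pconvex}. The only point deserving a sentence of care is the compatibility of the two sources of information about $C$: Observation~\ref{obs:geoball1} describes $C$ combinatorially as a polygonal curve (so that ``convex polygonal curve'' is even meaningful), while Lemma~\ref{lem:pconvex-bdconvex} supplies the global convexity that a union of $\pm 1$-slope segments would not automatically possess. Making explicit that these two descriptions refer to the same object $C$ — and that discarding the $\bd P$ segments is legitimate because $C \subseteq \intr P$ — is the one place where the argument must be stated precisely rather than merely asserted.
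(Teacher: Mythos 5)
Your proposal is correct and follows exactly the route the paper intends: it combines the $P$-convexity of $\geoball_s(r)$ (Lemma~\ref{lem:geoball_pconvex}) with Lemma~\ref{lem:pconvex-bdconvex} for convexity of each boundary component and Observation~\ref{obs:geoball1} for the slope-$\pm 1$ structure of the segments lying in $\intr P$. The paper gives no separate proof beyond asserting that these three facts immediately imply the corollary, so your assembly (including the check that the ball is closed) matches it.
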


The following corollary can also be easily derived from Lemmas~\ref{lem:P-convex} and \ref{lem:geoball_pconvex}.
\begin{corollary} \label{coro:geoball_seg}
 For any point $s\in P$ and any $r>0$,
 the geodesic ball $\geoball_s(r)$ intersects any line segment in $P$
 in a connected subset.
\end{corollary}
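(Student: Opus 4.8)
The plan is to derive Corollary~\ref{coro:geoball_seg} directly from the equivalence of Lemma~\ref{lem:P-convex} together with the $P$-convexity established in Lemma~\ref{lem:geoball_pconvex}. Concretely, I would first invoke Lemma~\ref{lem:geoball_pconvex} to assert that $\geoball_s(r)$ is $P$-convex for the given $s$ and $r > 0$. Then I would apply the $(i) \Rightarrow (ii)$ direction of Lemma~\ref{lem:P-convex}: since $\geoball_s(r)$ is $P$-convex, it satisfies condition (ii), namely that for any line segment $\ell \subset P$, the intersection $\geoball_s(r) \cap \ell$ is connected. This is precisely the statement of the corollary.

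The only substantive point to verify is that the hypotheses of the two lemmas are met. Lemma~\ref{lem:geoball_pconvex} requires only $s \in P$ and $r \in \Real$, which holds here (indeed $r > 0$ is a special case). For Lemma~\ref{lem:P-convex}, the set $A = \geoball_s(r)$ must be a subset of $P$, which is immediate from the definition of the geodesic ball. With these in place, the chain $\geoball_s(r)$ is $P$-convex $\Rightarrow$ for any segment $\ell \subset P$, $\geoball_s(r) \cap \ell$ is connected completes the argument with no further work.

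I do not anticipate any real obstacle, since this corollary is a straightforward corollary that merely repackages the connectivity clause buried in part (ii) of Lemma~\ref{lem:P-convex}. The one thing I would keep an eye on is terminology: the phrase ``connected subset'' in the corollary should be read as exactly the ``$A \cap \ell$ is connected'' condition, so I would make sure the statement I extract from Lemma~\ref{lem:P-convex} matches verbatim. Thus the proof is essentially a one-line deduction, and I would write it as such rather than reproving any connectivity facts from scratch.

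\begin{proof}
By Lemma~\ref{lem:geoball_pconvex}, the geodesic ball $\geoball_s(r)$ is $P$-convex. Since $\geoball_s(r) \subseteq P$, we may apply Lemma~\ref{lem:P-convex}: the $P$-convexity of $\geoball_s(r)$ implies condition (ii), which asserts in particular that for any line segment $\ell \subset P$, the intersection $\geoball_s(r) \cap \ell$ is connected. This is exactly the claim.
\end{proof}
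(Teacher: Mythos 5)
Your proof is correct and matches the paper's intended derivation exactly: the paper states that this corollary ``can also be easily derived from Lemmas~\ref{lem:P-convex} and \ref{lem:geoball_pconvex},'' which is precisely the chain you use ($P$-convexity of $\geoball_s(r)$ from Lemma~\ref{lem:geoball_pconvex}, then the $(i)\Rightarrow(ii)$ direction of Lemma~\ref{lem:P-convex}). Nothing further is needed.
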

\newcommand{\pfCorGeoBall}{
%\begin{proof}
%Let $\ell \subset P$ be a line segment.
%Assume to the contrary that $\geoball_s(r) \cap \ell$ consists of
%two or more connected components.
%Take two points $a, b\in \geoball_s(r) \cap \ell$ such that
%the segment $\seg{ab}$ is not contained in $\geoball_s(r)$.
%Then, $\seg{ab} = \pi_2(a,b)$ is the Euclidean shortest path that is not contained
%in $\geoball_s(r)$.
%This implies a contradiction to Lemma~\ref{lem:geoball_pconvex},
%the $P$-convexity of $\geoball_s(r)$.
%\end{proof}
}
\ShoLong{}{\pfCorGeoBall}
%Corollary~\ref{coro:geoball_seg} immediately implies the quasiconvexity of the geodesic distance
%from a fixed point $s$ to any point on a line segment.

A real-valued function $f$ is called \emph{quasiconvex} if its sublevel set
$\{x \mid f(x) \leq a\}$ for any $a\in \Real$ is convex.
Corollary~\ref{coro:geoball_seg} implies the following.
\begin{corollary} \label{coro:dist_seg}
 Given a point $s\in P$ and a line segment $\ell \subset P$,
 the function $f(x) = d(s, x)$ over $x\in \ell$ is quasiconvex.
\end{corollary}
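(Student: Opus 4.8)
The plan is to deduce quasiconvexity of $f(x) = d(s,x)$ over a segment $\ell$ directly from the connectivity statement in Corollary~\ref{coro:geoball_seg}. Recall that $f$ is quasiconvex precisely when each sublevel set $\{x \in \ell \mid f(x) \leq a\}$ is convex, and on a one-dimensional segment $\ell$ convexity is equivalent to connectedness: a subset of $\ell$ is convex if and only if it is a single (possibly empty) subinterval. So the entire task reduces to showing that for every $a \in \Real$, the set $L_a := \{x \in \ell \mid d(s,x) \leq a\}$ is connected as a subset of $\ell$.

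First I would fix an arbitrary $a \in \Real$. The key observation is that $L_a$ is exactly the intersection of the geodesic ball with the segment: by the definition of $\geoball_s(a)$ as $\{x \in P \mid d(s,x) \leq a\}$, we have
\[
  L_a = \{x \in \ell \mid d(s,x) \leq a\} = \geoball_s(a) \cap \ell.
\]
If $a \leq 0$ this set is either empty or the single point $s$ (when $s \in \ell$), which is trivially connected, so assume $a > 0$. Then Corollary~\ref{coro:geoball_seg} applies directly: since $\ell \subset P$ is a line segment, the geodesic ball $\geoball_s(a)$ intersects $\ell$ in a connected subset. Hence $L_a = \geoball_s(a) \cap \ell$ is connected.

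It remains only to translate connectedness into convexity. Since $\ell$ is a line segment, it is isometric to a real interval, and a connected subset of an interval is itself an interval, hence convex. Therefore $L_a$ is convex for every $a$, which is exactly the definition of $f$ being quasiconvex over $\ell$.

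I do not anticipate a genuine obstacle here, as the statement is essentially a repackaging of Corollary~\ref{coro:geoball_seg} through the elementary fact that on a one-dimensional domain ``connected'' and ``convex'' coincide. The only points requiring minor care are the degenerate cases (the sublevel set being empty, a single point, or all of $\ell$, and the case $a \leq 0$), all of which are trivially convex and can be disposed of in a sentence. The substantive content was already established when proving the $P$-convexity of the geodesic ball and its corollary; this corollary merely records the one-dimensional consequence.
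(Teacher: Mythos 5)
Your proof is correct and follows essentially the same route as the paper's: both identify the sublevel set $\{x\in\ell \mid d(s,x)\leq a\}$ with $\geoball_s(a)\cap\ell$, invoke Corollary~\ref{coro:geoball_seg} for connectedness, and use that a connected subset of a segment is convex. Your treatment of the degenerate case $a\leq 0$ is a small extra care the paper omits, but the argument is the same.
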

\newcommand{\pfCorDistSeg}{
\begin{proof}
For any $r>0$, $\geoball_s(r) \cap \ell$ is connected by Corollary~\ref{coro:geoball_seg},
and is therefore convex.
It suffices to observe that $\geoball_s(r) \cap \ell$ coincides with
the sublevel set of $f(x)$ over $x \in \ell$ at $r$;
that is, $\geoball_s(r) \cap \ell = \{x\in \ell \mid d(s,x) \leq r\}$.
%which is obvious.
\end{proof}
}
\ShoLong{}{\pfCorDistSeg}
%\begin{proof}
%For any $r>0$, $\geoball_s(r) \cap \ell$ is connected by Corollary~\ref{coro:geoball_seg},
%and is therefore convex.
%It suffices to see that $\geoball_s(r) \cap \ell$ coincides with
%the sublevel set of $f(x)$ over $x \in \ell$ at $r$;
%that is, $\geoball_s(r) \cap \ell = \{x\in \ell \mid d(s,x) \leq r\}$,
%which is obvious.
%\end{proof}

Indeed, the geodesic distance function $d(s, x)$ over $x\in \ell$ is
not only quasiconvex but convex;
this can be shown by a more careful geometric analysis.
Nonetheless, the quasiconvexity will be sufficient for our overall purpose.
%Also, remark that the convexity of $d(s, x)$ does not
%imply the uniquness of the relative center on a line segment.
%More precisely, as far as I understood,
%the relative center on $\ell$ can be a line segment only when
%$\ell$ is of slope $1$ or $-1$.

\subsection{Helly-type theorem for geodesic balls}
Here, we discuss the intersection of a family of $L_1$ geodesic balls,
and show that the $L_1$ geodesic balls have Helly number two.
More precisely, we claim the following theorem.
\begin{theorem} \label{thm:geoball_helly}
 Let $\mathcal{B}$ be a family of closed $L_1$ geodesic balls.
 If the intersection of every two members of $\mathcal{B}$ is nonempty,
 then %the intersection of all members of $\mathcal{B}$ is nonempty.
 $\bigcap \{ B \mid B \in \mathcal{B} \} \neq \emptyset$.
\end{theorem}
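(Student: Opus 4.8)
The plan is to handle infinite families by a compactness argument, reduce the finite case to three balls by induction, and concentrate all the genuinely $L_1$-specific content in that three-set base case. Throughout I will work with the slightly larger class of \emph{nice} sets, by which I mean sets that are $P$-convex, meet every segment of $P$ in a connected set, and whose boundary inside $\intr P$ consists of slope-$\pm1$ convex curves. Geodesic balls are nice by Lemma~\ref{lem:geoball_pconvex}, Corollary~\ref{coro:geoball_seg}, and Corollary~\ref{coro:geoball_pconvex1}, and this class is closed under intersection: if $p,q$ lie in every member then so does each $\pi_2(p,q)$, so $P$-convexity is preserved, and an intersection of two intervals on a segment is again an interval, so segment-connectivity is preserved as well. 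Since every $B\in\mathcal{B}$ is a closed subset of the compact polygon $P$, the finite intersection property reduces the theorem to finite subfamilies; as any finite subfamily of a pairwise-intersecting family is itself pairwise intersecting, it suffices to prove that finitely many pairwise-intersecting nice sets have a common point.

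For the finite case I would use strong induction on the number $N$ of sets, the cases $N\le 2$ being trivial. For $N\ge 4$, I would group the sets as $D_1=B_1$, $D_2=B_2$, and $D_3=\bigcap_{j\ge 3}B_j$. Each $D_i$ is nice; $D_3\neq\emptyset$ by the induction hypothesis, while $D_1\cap D_3=\bigcap_{j\neq2}B_j$ and $D_2\cap D_3=\bigcap_{j\neq1}B_j$ are nonempty by the hypothesis applied to $N-1$ sets, and $D_1\cap D_2=B_1\cap B_2\neq\emptyset$ is given. Thus $D_1,D_2,D_3$ are three pairwise-intersecting nice sets, and any common point of them lies in all $B_j$. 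Hence everything collapses to the base case $N=3$.

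The heart of the argument, and the step I expect to be the main obstacle, is this three-set case. Given pairwise-intersecting nice sets $A,B,C$, I would pick $w\in A\cap B$, $x\in A\cap C$, $y\in B\cap C$ and study the geodesic triangle on $x,y,w$. By $P$-convexity its edges satisfy $\pi_2(x,w)\subseteq A$, $\pi_2(y,w)\subseteq B$, and $\pi_2(x,y)\subseteq C$, and where two edges share their initial subpath from a common endpoint (for instance $\pi_2(x,w)$ and $\pi_2(x,y)$ at $x$) the splitting point lies in two of the three sets; if the triangle degenerates these points coincide and produce a common point of $A\cap B\cap C$. The work is to rule out a genuine pseudotriangle. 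Here I would exploit the $L_1$-specific rigidity: by Observation~\ref{obs:geoball1} the sets are bounded by slope-$\pm1$ chains, and by Lemma~\ref{lem:dist_seg_horizontal} the distance is convex on every horizontal and every vertical segment. Equivalently, after a $45^\circ$ rotation the metric becomes $L_\infty$ and the boundaries become axis-parallel, so each nice set locally resembles an axis-aligned box; the goal is to transport the Helly-number-two property of boxes, which holds coordinatewise by one-dimensional Helly, to the geodesic setting. Concretely, I would choose an axis-parallel transversal $\ell$ meeting the putative pseudotriangle and argue, using the connectedness of $A\cap\ell$, $B\cap\ell$, $C\cap\ell$ from Corollary~\ref{coro:geoball_seg} together with interval Helly, that the three traces must share a point of $\ell$; the slope-$\pm1$ boundary structure is exactly what should force the three intervals into a configuration where pairwise overlap implies common overlap.

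I emphasize that pure topology cannot suffice, which is why the base case is delicate. For the Euclidean metric the same geodesic triangle may bound a nondegenerate pseudotriangle avoided by all three balls, so Helly number two genuinely fails there; and the weaker hypothesis ``meets every line in a connected set'' is also insufficient, since three half-planes bounding a triangle are pairwise intersecting with empty common intersection. What must be used is the metric-combinatorial rigidity peculiar to $L_1$: the slope-$\pm1$ boundaries and the convexity of $d(s,\cdot)$ along axis-parallel directions, which together should preclude the obstructing pseudotriangle. Pinning down the transversal $\ell$ and controlling all the funnel configurations, in the spirit of the case analysis in the proof of Lemma~\ref{lem:dist_seg_horizontal}, is where I expect the bulk of the effort to lie.
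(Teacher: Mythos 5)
Your reduction to finitely many sets (compactness of $P$ plus the finite intersection property) and then to three sets (the standard Helly induction over an intersection-closed class of ``nice'' sets) is sound, and it is a genuinely different skeleton from the paper's, which instead invokes Breen's Helly-type theorem for simple polygons and therefore never has to exhibit a common point of three sets directly. But the three-set base case, which you yourself identify as ``the heart of the argument,'' is where all of the mathematical content of the theorem lives, and you have not proved it: you describe the geodesic triangle on points $w\in A\cap B$, $x\in A\cap C$, $y\in B\cap C$, and then propose to ``choose an axis-parallel transversal $\ell$'' and apply one-dimensional Helly to the traces $A\cap\ell$, $B\cap\ell$, $C\cap\ell$. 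For that to work you need a single chord $\ell\subset P$ that meets all three pairwise intersections $A\cap B$, $B\cap C$, $A\cap C$; connectedness of each individual trace gives nothing here, since $A\cap B\neq\emptyset$ does not imply $A\cap B\cap\ell\neq\emptyset$. No construction of such a transversal is given and its existence is not obvious, and as you note yourself, ``pinning down the transversal $\ell$ and controlling all the funnel configurations'' is deferred. So the proposal is a plan with a plausible but unexecuted core, not a proof.

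For comparison, the paper closes exactly this gap by a different and rather cheap device: it quotes Breen's theorem (pairwise nonempty intersections plus simply connected unions of every three members imply a common point for a family of simple polygons), so the only thing left to verify is that the union of three pairwise-intersecting balls has no hole. That is done by showing that the boundary of a hole $H$ could contain no piece of $\bd P$ and at most one convex slope-$\pm1$ chain per ball, while shooting the four rays of slope $\pm1$ from a point of $\intr H$ forces $\bd H$ to contain at least four distinct such convex chains --- a contradiction. If you wish to keep your direct route, you must supply an argument of comparable substance producing the transversal or otherwise killing the nondegenerate pseudotriangle; alternatively, your induction framework could absorb the paper's hole-counting argument, but deducing a common point of three sets from simple connectivity of their union is essentially reconstructing Breen's theorem.
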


In the following, we prove Theorem~\ref{thm:geoball_helly}.
For the purpose, we make use of a Helly-type theorem on simple polygons
proven by Breen~\cite{b-httsp-96}.
\begin{theorem}[Breen~\cite{b-httsp-96}] \label{thm:helly_simple}
 Let $\mathcal{P}$ be a family of simple polygons in the plane.
 If every three (not necessarily distinct) members of $\mathcal{P}$ have
 a simply connected union and
 every two members of $\mathcal{P}$ have a nonempty intersection,
 then $\bigcap\{P \mid P \in \mathcal{P}\} \neq \emptyset$.
\end{theorem}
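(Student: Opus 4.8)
The plan is to prove Breen's theorem by a compactness reduction followed by an algebraic-topological argument that converts the hypotheses phrased in terms of \emph{unions} into the hypotheses of a classical planar Helly-type theorem for simply connected sets (Molnár's theorem), which I then use as the combinatorial engine. First I would reduce to a finite family: each member is compact, so by the finite intersection property it suffices to show that every finite subfamily has a common point, and a finite subfamily inherits both hypotheses. Hence assume $\mathcal{P}=\{P_1,\dots,P_n\}$ is finite. Each $P_i$, being a simple polygon, is a simply connected compact polyhedron in $\Plane$ with Euler characteristic $\chi(P_i)=1$, and every intersection and union appearing below is again a compact polyhedron, so all Euler characteristics are well defined and $\chi$ is additive (a valuation), which is what makes inclusion--exclusion valid.

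The first key step is a \emph{no-holes} lemma: for any simply connected compact polyhedra $A_1,\dots,A_m\subseteq\Plane$, the intersection $\bigcap_i A_i$ has no bounded complementary component, i.e.\ $H_1(\bigcap_i A_i)=0$. Indeed, $\Plane\setminus\bigcap_i A_i=\bigcup_i(\Plane\setminus A_i)$; each $\Plane\setminus A_i$ is connected (a compact planar polyhedron is simply connected iff its complement is connected, by Alexander duality) and contains a fixed neighbourhood of infinity, so the union of these complements is connected. Thus the complement of $\bigcap_i A_i$ is connected, which for a planar polyhedron is precisely $H_1=0$. Consequently every pairwise and every triple intersection of members of $\mathcal{P}$ has $b_1=0$, so its Euler characteristic equals its number of connected components $b_0$.

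The second key step extracts the intersection data I need from the union hypotheses. Using the ``not necessarily distinct'' clause, applying the triple hypothesis to $\{P_i,P_i,P_j\}$ shows $P_i\cup P_j$ is simply connected, so $\chi(P_i\cup P_j)=1$; with $P_i\cap P_j\neq\emptyset$ and inclusion--exclusion, $1=\chi(P_i)+\chi(P_j)-\chi(P_i\cap P_j)=2-\chi(P_i\cap P_j)$, hence $\chi(P_i\cap P_j)=1$, and by the no-holes lemma $P_i\cap P_j$ is connected. Applying the triple hypothesis to distinct $\{P_i,P_j,P_k\}$ together with inclusion--exclusion for three sets gives $1=3-3+\chi(P_i\cap P_j\cap P_k)$, so $\chi(P_i\cap P_j\cap P_k)=1\neq 0$, which forces the triple intersection to be nonempty (and, again by the no-holes lemma, connected). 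Thus every two members of $\mathcal{P}$ meet in a nonempty connected set and every three meet in a nonempty set, which are exactly the hypotheses of Molnár's planar Helly-type theorem for simply connected compacta; its conclusion is $\bigcap_i P_i\neq\emptyset$, as desired.

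The step I expect to be the main obstacle is this final invocation. The Euler-characteristic reductions above are short, but Molnár's theorem is itself the substantive planar topological Helly statement, and for a fully self-contained argument one must reprove it by induction on $n$. The delicate point there is controlling the connectedness of the running intersection when merging two members (replacing $P_{n-1},P_n$ by $P_{n-1}\cap P_n$ and re-applying the hypotheses to the shrunken family), because the triple-union conditions for the merged family are not automatically inherited and must be re-derived. A secondary technical point, easy but worth stating explicitly, is to confirm that all sets involved are genuine compact polyhedra, so that $\chi$ really is additive and the complement computation in the no-holes lemma is legitimate; this is exactly where the piecewise-linearity of simple polygons is used.
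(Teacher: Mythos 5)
Your proposal is correct in substance, but note that the paper contains no proof of this statement at all: Theorem~\ref{thm:helly_simple} is imported verbatim from Breen's work and used purely as a black box to deduce Theorem~\ref{thm:geoball_helly} from Lemma~\ref{lem:threeballs}. So the only meaningful comparison is with Breen's original argument, and your route is very much in its spirit: the published proof likewise works by verifying the hypotheses of Moln\'ar's topological Helly theorem for simply connected compacta (every two members meet in a nonempty connected set, every three have a common point) and then invoking that theorem as the combinatorial engine. Your Euler-characteristic bookkeeping is sound and is arguably cleaner than a direct topological argument: the compactness reduction to finite families is legitimate since simple polygons are compact; the exploitation of the ``not necessarily distinct'' clause via $\{P_i,P_i,P_j\}$ is exactly the reason the theorem is stated with that clause, and it correctly yields $\chi(P_i\cap P_j)=1$ (using that $P_i\cup P_j$ is connected because $P_i\cap P_j\neq\emptyset$, and has $H_1=0$ by hypothesis); the three-set inclusion--exclusion then gives $\chi(P_i\cap P_j\cap P_k)=1\neq 0$, hence nonemptiness; and your no-holes lemma via Alexander duality correctly upgrades $\chi=1$ to ``nonempty and connected,'' where one should phrase the duality statement as ``complement connected iff $H_1=0$'' (rather than ``simply connected''), since intersections may be disconnected --- you in fact only use the two directions where this is valid. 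The additivity of $\chi$ is unproblematic here because all sets in sight are compact polyhedra admitting a common triangulation of the arrangement. The residual reliance on Moln\'ar's theorem is a genuine dependency, but it is a classical published result, so your proof sits at the same level of rigor as the paper's own reliance on Breen; your closing remark correctly identifies that a fully self-contained treatment would have to reprove Moln\'ar by induction, where the delicate point is exactly the one you name (re-establishing the pairwise-connected-intersection hypothesis after merging two members into their intersection). In short: no gap, same overall strategy as the cited source, with the Euler-characteristic reductions doing the polygon-specific work that the paper delegates entirely to the citation.
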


Thus, it suffices to show that the union of two or three balls
is simply connected, provided that any two of them have a nonempty intersection.
This can be done based on the above discussion on the geodesic balls
with Lemma~\ref{lem:pconvex-bdconvex} and Corollary~\ref{coro:geoball_pconvex1}.
%Recall that each $L_1$ geodesic ball is $P$-convex by Lemma~\ref{lem:geoball_pconvex},
%and the intersection of $P$-convex sets is also $P$-convex.
%Also, note that any $P$-convex set is simply connected by definition.
\begin{lemma} \label{lem:threeballs}
 Let $B_1, B_2, B_3$ be any three closed $L_1$ geodesic balls
 such that every two of them have a nonempty intersection.
 Then, the union $B_1 \cup B_2 \cup B_3$ is simply connected.
\end{lemma}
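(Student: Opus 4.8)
The plan is to prove simple connectivity by showing that the union $B_1 \cup B_2 \cup B_3$ has no holes, i.e.\ that its complement $P \setminus (B_1 \cup B_2 \cup B_3)$ has no bounded connected component whose closure lies in $\intr P$. Since each $B_i$ is itself a simple polygon (Observation~\ref{obs:geoball1}), the union is a polygonal region, and a polygonal region fails to be simply connected exactly when it encloses a hole. So I would argue by contradiction: suppose $H$ is a hole of the union. Then $\bd H$ is a closed curve lying in $\bd(B_1 \cup B_2 \cup B_3)$, and since $H$ is surrounded by the union but disjoint from it, every point of $\bd H$ belongs to the boundary of at least one $B_i$; moreover the relevant boundary arcs lie in $\intr P$ (a genuine hole cannot touch $\bd P$, since $P$ is itself simply connected and each $B_i$ includes its portion of $\bd P$).

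The key structural input is Corollary~\ref{coro:geoball_pconvex1}: each connected component of $\bd B_i \cap \intr P$ is a \emph{convex} polygonal curve (segments of slope $\pm 1$) that bends \emph{away} from $\intr B_i$. The idea is that a hole boundary $\bd H$ must be assembled from arcs of the three curves $\bd B_1, \bd B_2, \bd B_3$, and as one traverses $\bd H$ each arc is locally convex toward the \emph{outside} of $H$ (since $H$ is in the exterior of each ball and the balls curve away from their interiors). Traversing a closed curve and summing turning, the total signed turning must be $\pm 2\pi$; but a curve made of pieces each curving consistently away from the region it encloses forces the wrong sign of total curvature, yielding a contradiction. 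Intuitively, three $P$-convex regions cannot trap a hole between them: at each point of $\bd H$ one of the $B_i$ is locally convex and bulging outward, so $\bd H$ can never complete a loop.

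The most efficient way to make this rigorous is to reduce to the two-ball case and then patch. First I would handle $B_1 \cup B_2$: each hole boundary would have to alternate between convex arcs of $\bd B_1$ and $\bd B_2$, meeting at intersection points of the two boundary curves, and I would show that two such convex curves (each concave toward its ball) can cross in a way that bounds a common region only on the side \emph{inside} a ball, never enclosing an empty hole. Because $B_1 \cap B_2 \neq \emptyset$ and both are $P$-convex, their union is simply connected — essentially a statement that two overlapping $P$-convex sets cannot create a hole. For three balls I would use that each pairwise intersection is nonempty and again track the turning of a hypothetical $\bd H$: along any maximal arc contributed by $B_i$ the curve turns the wrong way to close up a small enclosed region, so no bounded complementary component confined to $\intr P$ can exist.

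The main obstacle will be the pairwise-overlap bookkeeping: making precise that at transition points between arcs of different $\bd B_i$ the local convexity and the nonempty-intersection hypothesis combine to rule out an enclosed hole. A hole could in principle be bounded by more than three arcs with several transition vertices, so I would need either a careful orientation/turning argument (each arc being convex toward the outside of $H$ forces the total turning to have the wrong sign to equal $+2\pi$) or a direct combinatorial argument that any point of $\bd H$ witnesses one $B_i$ separating $H$ from $\intr B_i$, contradicting that the three balls pairwise meet. I expect the cleanest route is the two-stage reduction — first establishing that the union of two pairwise-intersecting $P$-convex simple polygons is simply connected, then bootstrapping to three — which isolates the convexity argument to the simplest possible configuration.
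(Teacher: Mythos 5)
Your overall framing (assume a hole $H$, observe that $\bd H$ avoids $\bd P$ and is assembled from convex arcs of the $\bd B_i$) matches the paper's setup, but the mechanism you propose for the final contradiction does not work. The turning-number argument --- ``each arc curves away from $H$, so the total signed turning has the wrong sign to reach $+2\pi$'' --- is false as stated: the arcs themselves contribute non-positive turning, but each transition vertex between consecutive arcs can contribute up to (nearly) $\pi$ of positive turning, so three convex arcs meeting at three corners can perfectly well close up into a loop of total turning $+2\pi$. This is exactly the ``curvilinear triangle'' left uncovered by three pairwise-intersecting Euclidean disks, whose union is \emph{not} simply connected. Since your argument uses only convexity of the boundary arcs (a property Euclidean geodesic balls also enjoy, by $P$-convexity), it would prove the lemma for Euclidean balls as well, where it fails; the paper explicitly notes after Theorem~\ref{thm:geoball_helly} that the Euclidean Helly number exceeds two. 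The same counterexample defeats the proposed two-stage reduction: for three pairwise-intersecting Euclidean disks every pairwise union is simply connected, yet the triple union can have a hole, so simple connectivity does not ``bootstrap'' from two balls to three.

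The missing ingredient is the specifically $L_1$ content of Corollary~\ref{coro:geoball_pconvex1}: each boundary component is convex \emph{and consists only of segments of slope $\pm 1$}. The paper's actual contradiction runs as follows. After showing (by a path-joining/Jordan-curve argument) that each ball contributes at most one connected component of $\bd B_i \cap \intr P$ to $\bd H$ and that $\bd P$ contributes nothing, one knows $\bd H$ is made of at most three convex chains of slope-$\pm1$ segments. Shooting the four rays of slopes $\pm1$ from an interior point of $H$, convexity of each chain forces each ray to exit $H$ through a \emph{different} chain, so $\bd H$ would need at least four chains --- a contradiction with only three balls. Any correct proof must exploit this two-slope restriction in some such way; convexity alone cannot suffice. (Your remark that a hole cannot touch $\bd P$ is correct in conclusion, but the reason is the local one the paper gives --- both sides of $\bd P$ near such a point would have to lie inside $P$ --- not merely that each $B_i$ ``includes its portion of $\bd P$''.)
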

\begin{proof}% That is, the boundary of the union has more than one connected component and one of them is $\bd H$.
Since every pair of balls have nonempty intersection, the union $B_1 \cup B_2 \cup B_3$ is connected. Assume to the contrary that the union $B_1 \cup B_2 \cup B_3$ has a hole $H$. This hole is a simple polygon whose boundary $\bd H$ consists of
portions of $\bd B_i \cap \intr P$ and $\bd P$ for $i=1, 2, 3$.
We consider each of the connected components $C_1, C_2, \ldots, C_m$ of $\bd B_i \cap \intr P$ and $\bd P$ that appear on $\bd H$.

%\ShoLong{}{
%\begin{figure}[tb]
%  \center
%  \includegraphics[width=.4\textwidth]{fig_threeballs2}
%  \caption{%Illustration of the proof of Lemma~\ref{lem:threeballs}. (a) 
%  Two connected components of $B_1$ cannot be present in $H$ since it implies non-simplicity of $P$.}% (b) Three convex curves of slope $\pm 1$ cannot create three convex vertices of a hole.}
%  \label{fig:threeballs}
%\end{figure}}

We claim that, for any $i=1, 2, 3$, only a single connected component of $\bd B_i \cap \intr P$ can appear on $\bd H$. Assume otherwise that there are two components, say $C_1$ and $C_2$, that satisfy $C_1, C_2 \subset \bd B_1 \cap \intr P$. By definition of connected component, $C_1$ partitions $P$ into two components. 
Since $C_1$ appears on $\bd B_1$, $H$ lies in one side of $C_1$ and $B_1$ in the other side $C_2$.
Analogously, $H$ lies in one side of $C_2$ and $B_1$ in the other side of $C_2$.
Pick any two points $p_1 \in C_1 \cap \bd H$ and $p_2 \in C_2 \cap \bd H$. Since $H$ is a simple polygon (and thus connected), there exists a path $\pi$ between $p_1$ and $p_2$ that stays inside $H$ (and in particular avoids $B_1$). On the other hand, since $B_1$ is an $L_1$ geodesic ball, there exists a path $\pi'$ between $p_1$ and $p_2$ that stays inside $B_1$. Now consider the loop $L:=\pi \cup \pi'$ constructed by merging the two paths $\pi$ and $\pi'$. By construction, an endpoint of $C_1$ must be contained in the region bounded by $L$, which is a contradiction since the endpoints of $C_1$ by definition must belong to $\bd P$.

%Since $C_1$ appears in $\bd H$ and $C_2\subset B_1$, we conclude that $H$ must lie in one of the two components and $B_1$ in the other one. However, recall that $H \subseteq P$ and $P$ is a simple polygon. In particular, there exists a path $\pi \subset H$ between two points $p_1 \in C_1$ and $p_2 \in C_2$ such that $\pi \setminus \{p_1, p_2\} \subset P \setminus B_1$, a contradiction (See \figurename~\ref{fig:threeballs}).

%\complain{new}
%\complain{end of new}

By a similar reasoning we claim that $\bd H$ cannot contain any portion of $\bd P$ (or equivalently $\bd H \subset \intr P$). Indeed, assume for the sake of contradiction that there exists a point $p \in \bd P \cap \bd H$. Consider a sufficiently small neighborhood around $p$: $\bd P$ partitions this neighborhood into two components, one of which must be contained in $H$. Recall that by hypothesis we have $\bd H \subset \bd (B_1 \cup B_2 \cup B_3)$, and in particular the other component of the neighborhood must be contained in $B_1 \cup B_2 \cup B_3$. However, both regions are subsets of $P$ and as such they must both lie in the same side, a contradiction. % Suppose to the contrary that $\bd H \cap \bd P \neq \emptyset$. Let $C \subset \bd H \cap \bd P$ be a connected portion.

Thus, the boundary of $H$ is formed by portions of at most one connected component of each ball. Recall that, by Corollary~\ref{coro:geoball_pconvex1}, each connected component is a convex polygonal curve consisting of line segments of slope $1$ or $-1$. In particular, $\bd H$ also consists of line segments of slope $\pm 1$ and thus must contain at least four line segments. Pick any point $p\in \intr H$ and shoot four rays of slope $\pm 1$ until they hit a point of $\bd H$. Since each connected component is convex, we conclude that each ray must hit a different component. In particular, $H$ must be formed by at least four connected components, a contradiction. %it is defined by these lines, its bIn particular, 

\end{proof}

\ShoLong{}{Therefore, Theorem~\ref{thm:geoball_helly} follows directly from
Theorem~\ref{thm:helly_simple} and Lemma~\ref{lem:threeballs}.
We note that Theorem~\ref{thm:geoball_helly}
does not hold for the Euclidean case. It is easy to construct three disks (Euclidean balls) such that
every two of them intersect but the intersection of the three is empty.
This implies that the Helly number of Euclidean geodesic balls
is strictly larger than two.
%We conjecture that the Euclidean geodesic balls in a simple polygon have Helly number three, as for the convex sets in the plane.
}

%The Helly number two for the $L_1$ geodesic balls
%is comparable to that for the general convex compact sets in the plane.

%We are now ready to prove Theorem~\ref{thm:geoball_helly}.
%\begin{proof}[Proof of Theorem~\ref{thm:geoball_helly}]
% Let $\mathcal{B}$ be a family of closed $L_1$ geodesic balls
% such that any two members of $\mathcal{B}$ have a nonempty intersection.
% Remark that each geodesic ball is $P$-convex by Lemma~\ref{lem:geoball_pconvex},
% and the intersection of $P$-convex sets is also $P$-convex.
% Also, note that any $P$-convex set is simply connected.
%
% Consider any two members $B_1$ and $B_2$ of $\mathcal{B}$.
% Since $B_1$ and $B_2$ are $P$-convex, the intersection $B_1\cap B_2$ is simply connected
% and nonempty by the assumption.
% If $B_1 \cup B_2$ is not simple,
%
%\end{proof}

%%%%%%%%%%%%%%%%%%%%%%%%%%%%%%%%%%%%%%%%%%
\section{The $L_1$ Geodesic Diameter}
\label{sec:diameter}
%%%%%%%%%%%%%%%%%%%%%%%%%%%%%%%%%%%%%%%%%%
In this section, we show that the $L_1$ geodesic diameter of $P$, $\diam(P)$,
and a diametral pair can be computed in linear time
by extending the approach of Suri~\cite{s-agfnpsp-87} and
Hershberger and Suri~\cite{hs-msspm-97} to the $L_1$ case.
%Here, we show that the matrix searching technique by Hershberger and Suri~\cite{hs-msspm-97}
%can be applied to our problem.
%For the purpose, we focus on showing the prerequisites of the method
%for computing $\diam(P)$.
For any point $s\in P$, let $\far(s)$ be the maximum geodesic distance
from $s$ to any other point in $P$, that is, $\far(s) = \max_{q\in P} d(s, q)$.
A point $q\in P$ such that $d(s, q) = \far(s)$ is called
a \emph{farthest neighbor of $s$}.
Obviously, $\diam(P) = \max_{s\in P} \far(s)$ and $\rad(P) = \min_{s\in P} \far(s)$.
The following lemma is a key observation for our purpose.
%proven with Fact~\ref{fact:euc_simple} and Corollary~\ref{coro:geoball_seg}.
\begin{lemma} \label{lem:farthest_neighbor}
 For any $s\in P$, all farthest neighbors of $s$ lie on the boundary $\bd P$ of $P$,
 and at least one of them is a vertex of $P$.
\end{lemma}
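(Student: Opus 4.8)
The plan is to prove the two assertions separately: first that every farthest neighbor lies on $\bd P$, then that at least one farthest neighbor is a vertex of $P$.

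For the first claim, I would argue by contradiction. Suppose some farthest neighbor $q$ of $s$ lies in $\intr P$. The idea is to exhibit a nearby point whose geodesic distance from $s$ is strictly larger, contradicting maximality. Since $q \in \intr P$, there is a small disk around $q$ contained in $P$. I would look at the last segment of the shortest path $\pi_2(s,q)$ arriving at $q$, which has some direction, and push $q$ a little bit \emph{away} from $s$ along a direction that increases the $L_1$ distance. Concretely, by Fact~\ref{fact:euc_simple} the path $\pi_2(s,q)$ is an $L_1$ shortest path, and its final segment arrives at $q$ with some slope; moving $q$ slightly in a direction that continues to increase along both coordinate differences (i.e.\ in the ``outward'' quadrant determined by the final segment) keeps the extended path monotone and hence, by Fact~\ref{fact:l1length}, increases its length by exactly the $L_1$ length of the small displacement. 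A cleaner way to phrase this is via Lemma~\ref{lem:dist_seg_horizontal}: take a short horizontal (or vertical) segment $\ell$ through $q$ lying in the interior disk; then $f(x) = d(s,x)$ is convex on $\ell$, so $f$ attains its maximum over $\ell$ at one of the two endpoints of $\ell$, both of which are distinct from the interior point $q$. Unless $f$ is constant on $\ell$, this already gives a point strictly farther than $q$; and if $f$ were constant on every such small segment through $q$ then $q$ would be a local max of a convex-on-lines function, which forces $d(s,\cdot)$ to be locally constant around $q$, but that is impossible since $d(s,x)\ge d(s,q)$ with equality only where the path length does not grow, contradicting that $q\in\intr P$ admits freely extendable monotone paths. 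So $q$ cannot be interior, establishing that all farthest neighbors lie on $\bd P$.

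For the second claim, I would use the quasiconvexity (indeed convexity on segments) established in Corollary~\ref{coro:dist_seg} and Lemma~\ref{lem:dist_seg_horizontal}. Knowing that a farthest neighbor $q$ lies on $\bd P$, consider the edge $e$ of $\bd P$ containing $q$ (if $q$ is already a vertex we are done). The edge $e$ is a line segment in $P$, so by Corollary~\ref{coro:dist_seg} the function $f(x)=d(s,x)$ restricted to $e$ is quasiconvex, meaning its sublevel sets are subsegments. A quasiconvex function on a segment attains its maximum at one of the two endpoints of the segment. Since both endpoints of $e$ are vertices of $P$, and $f(q)=\far(s)$ is the global maximum, one of these two endpoints must also achieve value $\far(s)$; that endpoint is a vertex of $P$ and a farthest neighbor of $s$.

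The main obstacle I anticipate is making the first part fully rigorous, specifically ruling out the degenerate possibility that $d(s,\cdot)$ is locally constant around an interior point $q$. The convexity-on-segments statements give the maximum at endpoints but do not immediately exclude a constant function on a short segment. The resolution is to note that for an interior point $q$ one can always extend a shortest path monotonically a little further in a direction strictly increasing $|\seg{\cdot}|$ from $s$ (using that the incoming segment of $\pi_2(s,q)$ has a well-defined direction and that a full neighborhood of $q$ lies in $P$), so $d(s,\cdot)$ is \emph{strictly} increasing in that direction and cannot be locally maximal at $q$. I would phrase this carefully so that it covers all four sign patterns of the final-segment slope. The second part is comparatively routine once quasiconvexity is in hand.
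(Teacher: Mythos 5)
Your proposal is correct and follows essentially the same route as the paper: rule out interior farthest neighbors by extending the last segment of $\pi_2(s,q)$ to a strictly farther point, and then use quasiconvexity of $d(s,\cdot)$ on the edge containing a boundary farthest neighbor (Corollary~\ref{coro:dist_seg}) to conclude that an endpoint of that edge, a vertex, is also farthest. The one justification to tighten in the first step is that $d(s,q') > d(s,q)$ holds not because the extended path is monotone (it need not be), but because concatenating $\pi_2(s,q)$ with the extension of its final segment yields the Euclidean shortest path to $q'$, which is an $L_1$ shortest path by Fact~\ref{fact:euc_simple} and has strictly larger $L_1$ length.
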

\newcommand{\pfFarNeig}{
\begin{proof}
We first show that any farthest neighbor of $s$ must lie on the boundary $\bd P$ of $P$.
Assume to the contrary that a point $q\in \intr P$
is a farthest neighbor of $s$.
Consider the Euclidean shortest path $\pi_2(s, q)$ from $s$ to $q$.
By Fact~\ref{fact:euc_simple}, $d(s, q) = |\pi_2(s, q)|$.
Let $q' \in \bd P$ be the point on $\bd P$
hit by the extension of the last segment of $\pi_2(s, q)$.
Observe that the Euclidean shortest path $\pi_2(s, q')$ from $s$ to $q'$
is obtained by concatenating $\pi_2(s, q)$ and the segment $\seg{qq'}$.
Since $q' \neq q$,
we have a strict inequality $d(s, q') = |\pi_2(s, q')| =|\pi_2(s, q)| + |\seg{qq'}| > d(s, q)$,
a contradiction to the assumption that $q$ is a farthest neighbor of $s$.
%Let $r := d(s, q)$.
%Then, $q$ lies on the boundary of the geodesic ball $\geoball_s(r)$,
%that is, $q\in \bd \geoball_s(r) \cap \intr P$.
%Since $q\in \intr P$, there exists a line segment $\ell=\seg{ab} \subset P$
%with $q\in \ell$,
%where $a\in \geoball_s(r)$ and $b \notin \geoball_s(r)$.
%By Corollary~\ref{coro:geoball_seg}, $\geoball_s(r) \cap \ell$ is connected,
%and we thus have $\geoball_s(r) \cap \ell = \seg{aq}$.
%Since $b \notin \geoball_s(r)$, we also have $d(s, b) > r = d(s, q)$,
%a contradiction.
Therefore, there is no such farthest neighbor lying in the interior of $P$.
This proves the first statement of the lemma.

Let $q \in P$ be a farthest neighbor of $s$.
Then, $q$ lies on $\bd P$ as shown above.
Let $e$ be the edge of $P$ on which $q$ lies.
Corollary~\ref{coro:dist_seg} implies that
the geodesic distance $d(s, x)$ over $x \in e$ is quasiconvex
and thus it is maximized when $x$ is an endpoint of $e$,
that is, a vertex of $P$.
Thus, the lemma is shown.
\end{proof}
}
\ShoLong{}{\pfFarNeig}

%An immediate implication is the following.
\begin{corollary} \label{coro:diam_pair}
 There exist two vertices $v_1$ and $v_2$ of $P$
 such that $d(v_1, v_2) = \diam(P)$, that is,
 $(v_1, v_2)$ is a diametral pair.
\end{corollary}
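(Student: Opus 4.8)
The plan is to derive this corollary directly from Lemma~\ref{lem:farthest_neighbor} by applying it twice, once at each endpoint of a diametral pair. The only conceptual point to settle first is the relationship between $\diam(P)$ and the function $\far$: if $(p, q)$ is a diametral pair, then since $\far(p) = \max_{x \in P} d(p,x) \geq d(p,q) = \diam(P)$ and also $\far(p) \leq \diam(P)$ by the definition of the diameter, we get $\far(p) = \diam(P)$. Hence $q$ is a farthest neighbor of $p$, and more generally any endpoint of a diametral pair attains its own $\far$ value.

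With this observation, I would begin from an arbitrary diametral pair $(p,q)$, which exists because $P$ is compact and $d$ is continuous. Applying Lemma~\ref{lem:farthest_neighbor} to the source point $s = p$, at least one farthest neighbor of $p$ is a vertex of $P$; call it $v_2$. Since farthest neighbors of $p$ all realize $\far(p) = \diam(P)$, we have $d(p, v_2) = \diam(P)$. Thus $(p, v_2)$ is itself a diametral pair, now with one endpoint guaranteed to be a vertex.

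Next I would repeat the argument at the other endpoint. By the same reasoning as above, $\far(v_2) = d(v_2, p) = \diam(P)$, so $p$ is a farthest neighbor of $v_2$. Applying Lemma~\ref{lem:farthest_neighbor} to the source point $s = v_2$ yields a farthest neighbor $v_1$ of $v_2$ that is a vertex, and it satisfies $d(v_1, v_2) = \far(v_2) = \diam(P)$. Both $v_1$ and $v_2$ are vertices of $P$, so $(v_1, v_2)$ is the desired diametral pair of vertices.

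There is no genuine obstacle here; the statement is essentially a two-fold corollary of the preceding lemma, and the only thing that must be handled carefully is the bookkeeping that each endpoint of a diametral pair does indeed achieve the maximum $\far$ value, so that Lemma~\ref{lem:farthest_neighbor} is applicable at each step. The existence of an initial diametral pair follows from compactness of $P$ and continuity of $d$, which is already implicit in the fact that $\diam(P)$ is defined as a maximum rather than a supremum.
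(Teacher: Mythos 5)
Your proof is correct and is exactly the argument the paper intends: the corollary is stated without proof immediately after Lemma~\ref{lem:farthest_neighbor} precisely because it follows by applying that lemma twice, first at one endpoint of a diametral pair and then at the resulting vertex. Your bookkeeping that each endpoint of a diametral pair attains its $\far$ value is the right (and only) detail to check.
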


Thus, the problem of computing $\diam(P)$ is solved by finding
the farthest vertex-pair.
Let $v_1, \ldots, v_n$ be the vertices of $P$ ordered counterclockwise along $\bd P$.
Let $v_a$ and $v_b$ be vertices of $P$ such that $v_a$ is a farthest neighbor of $v_1$ and
$v_b$ is a farthest neighbor of $v_a$.
The existence of $v_a$ and $v_b$ is guaranteed by Lemma~\ref{lem:farthest_neighbor}.
We assume that $a < b$; otherwise,
we take the mirror image of $P$ for the following discussion.
The three vertices $v_1, v_a, v_b$ divide $\bd P$ into three chains:
$U_1 = (v_2, \ldots, v_{a-1})$, $U_2 = (v_{a+1}, \ldots, v_{b-1})$,
and $U_3 = (v_{b+1},\ldots, v_n)$.
Let $W_1, W_2, W_3$ be the chains complementary to $U_1, U_2, U_3$, respectively,
that is, $W_1 = (v_a, \ldots, v_n, v_1)$, $W_2 = (v_b, \dots, v_n, v_1, \ldots, v_a)$,
and $W_3 = (v_1, \ldots, v_b)$.
%Note that it is possible to happen $b = 1$.
We then observe the following, which we prove with Lemma~\ref{lem:farthest_neighbor}.
\begin{lemma}%[Suri~\cite{s-cgfnsp-89}]
\label{lem:three_chains}
 For any $i = 1,2,3$ and $u\in U_i$,
 there is a vertex $w\in W_i$ that is a farthest neighbor of $u$.
\end{lemma}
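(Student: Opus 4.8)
The plan is to reduce the statement to an extremal property of the three ``anchor'' vertices $v_1, v_a, v_b$ and to exploit a quadrangle (Monge-type) inequality for the geodesic distance. First I would record the two inequalities coming from the construction: since $v_a$ is a farthest neighbor of $v_1$ and $v_b$ is a farthest neighbor of $v_a$, we have $d(v_1, x) \le d(v_1, v_a)$ and $d(v_a, x) \le d(v_a, v_b)$ for every $x \in P$. The key tool is the following: for any four points $p_1, p_2, p_3, p_4$ occurring in this cyclic (counterclockwise) order on $\bd P$,
\[
 d(p_1, p_3) + d(p_2, p_4) \;\ge\; d(p_1, p_2) + d(p_3, p_4)
 \quad\text{and}\quad
 d(p_1, p_3) + d(p_2, p_4) \;\ge\; d(p_2, p_3) + d(p_4, p_1).
\]
This is proved exactly as in the Euclidean setting: because $p_1, p_3$ separate $p_2$ from $p_4$ along $\bd P$, the two shortest paths $\pi_2(p_1, p_3)$ and $\pi_2(p_2, p_4)$ must meet at some point $z$ (a Jordan-curve argument), and since a subpath of a shortest path is shortest, the two ``diagonals'' split exactly at $z$ while $d(p_i,p_j)\le d(p_i,z)+d(z,p_j)$ for the ``sides''; adding these triangle inequalities yields the claim.

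Next I would make the reduction. By Lemma~\ref{lem:farthest_neighbor}, every farthest neighbor of $u$ lies on $\bd P$ and at least one is a vertex $w$; moreover each $W_i$ is the complement of the open chain $U_i$ and therefore contains both anchors $v_a$ and $v_b$. Hence it suffices to prove: \emph{if $u \in U_i$ and some vertex farthest neighbor $w$ of $u$ lies in $U_i$, then $v_a$ or $v_b$ is also a farthest neighbor of $u$.} Indeed, if the vertex $w$ already lies in $W_i$ we are done, and otherwise $w \ne v_1, v_a, v_b$ and the produced anchor is a vertex of $W_i$. Since $\far(u)=d(u,w)$ is the maximum distance from $u$, it is enough to exhibit an anchor $t \in \{v_a, v_b\}$ with $d(u,t)\ge d(u,w)$.

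The core is then a short case analysis over $i \in \{1,2,3\}$ and over whether $w$ precedes or follows $u$ along the open chain $U_i$ (the four points $u$, $w$ and two anchors are distinct, and their cyclic order is determined). In five of the six configurations a single application of the quadrangle inequality, combined with one of the two extremality inequalities, finishes. For instance, if $u, w \in U_1$ with $w$ between $v_1$ and $u$, applying the inequality to $w, u, v_a, v_b$ gives $d(w,v_a)+d(u,v_b)\ge d(w,u)+d(v_a,v_b)$, and since $d(v_a,v_b)\ge d(v_a,w)$ we obtain $d(u,v_b)\ge d(u,w)=\far(u)$, so $v_b$ is a farthest neighbor; the other four easy configurations are entirely analogous, using the extremality of $v_a$ (from $v_1$) or of $v_b$ (from $v_a$) as appropriate.

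The one configuration I expect to resist a single quadrangle inequality --- and hence the main obstacle --- is $u, w \in U_3$ with $u$ lying between $v_b$ and $w$. Here all three anchors lie on the \emph{same} boundary arc determined by the pair $\{u,w\}$, and the only extremal facts available point ``the wrong way'' for a one-step argument. I would resolve it by chaining two inequalities. Assume for contradiction that $d(u,v_a) < d(u,w)$. Applying the quadrangle inequality to $v_a, v_b, u, w$ gives $d(v_a,u)+d(v_b,w) \ge d(v_a,v_b)+d(u,w)$, whence $d(v_b,w) > d(v_a,v_b)$. Applying it to $v_1, v_a, v_b, w$ gives $d(v_1,v_b)+d(v_a,w) \ge d(v_1,v_a)+d(v_b,w)$, and since $d(v_1,v_a)\ge d(v_1,v_b)$ this yields $d(v_a,w)\ge d(v_b,w) > d(v_a,v_b)$, contradicting $d(v_a,w)\le d(v_a,v_b)$. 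Therefore $d(u,v_a)=\far(u)$ and $v_a$ is the required farthest neighbor in $W_3$. This two-step argument for the hard configuration is the crux; everything else is bookkeeping of cyclic orders.
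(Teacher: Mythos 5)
Your proof is correct and rests on the same two ingredients as the paper's: the quadrangle inequality obtained from the crossing of geodesics whose endpoints interleave on $\bd P$ (plus the triangle inequality), and the extremality of the anchors, $d(v_1,v_a)\ge d(v_1,\cdot)$ and $d(v_a,v_b)\ge d(v_a,\cdot)$. The paper writes out only the case $i=1$ (its two subcases yield $v_a$ and $v_b$, respectively, exactly as in your ``easy'' configurations) and declares $i=2,3$ analogous. Your treatment is actually more complete: you correctly observe that the configuration $u,w\in U_3$ with $u$ between $v_b$ and $w$ is \emph{not} a mere relabeling, since all three anchors lie on the same boundary arc determined by $\{u,w\}$ and no single quadrangle inequality combined with one extremality fact closes the case; your two-step chaining argument (deriving $d(v_b,w)>d(v_a,v_b)$ and then $d(v_a,w)\ge d(v_b,w)$ to contradict $d(v_a,w)\le d(v_a,v_b)$) checks out. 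So the route is the same, but your case analysis fills in a configuration that the paper dismisses with ``the other cases are analogous.''
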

\newcommand{\pfThreeChains}{
\begin{proof}
We note that an almost identical statement (Lemma 8 in~\cite{s-cgfnsp-89}) was given by Suri for the Euclidean geodesic distance. In the following we give the details for completeness. Also note that it suffices to give the proof for $i=1$ only, since the other cases are analogous.

Let $u \in U_1$: by Lemma~\ref{lem:farthest_neighbor} there exists a vertex $v$ of $P$ that is a farthest neighbor of $u$.
If $v\in W_1$, we are done, hence from now on we focus in the $v \in U_1$ case. Walk along the boundary of $U_1$ in counterclockwise fashion and consider the order in which we meet points $v_1$, $u$, $v$, and $v_a$: the order is either $v_1, u, v, v_a$ or $v_1, v, u, v_a$. In the former case, the paths $\pi_2(v_1,v)$ and $\pi_2(u,v_a)$ must cross (since $u$ and $v_a$ lie on
 different sides of the path $\pi_2(v_1,v)$). In particular, by the triangular inequality we have $d(u, v_a) + d(v_1, v) \geq  d(u, v) + d(v_1, v_a)$. Recall that $v_a$ is a farthest neighbor of $v_1$, hence $d(v_1, v_a) \geq d(v_1, v)$, and in particular $d(u, v_a) \geq d(u, v)$. Thus, we conclude that $v_a\in W_1$ is also a farthest neighbor of $u$ as claimed. The case in which the order is $v_1, v, u, v_a$ is analogous (but in this case we find that $v_b$ is a farthest neighbor of $u$).

%We claim that Then, due to the triangle inequality, a simple geometric argument can show that one of $v_a$ and $v_b$ is also a farthest neighbor of $u$. (A similar See Lemmas 2 and 8 of Suri~\cite{s-cgfnsp-89} for details.)

%Since both $v_a$ and $v_b$ belong to $W_1$, this leads to a contradiction to the assumption that there is no $v\in W_1$ that is a farthest neighbor of $u$.
\end{proof}
}
\ShoLong{}{\pfThreeChains}

Lemma~\ref{lem:three_chains} implies that
computing a farthest vertex from every vertex of $P$
can be done by handling three pairs $(U_i, W_i)$ of two disjoint chains that
partition the vertices of $P$. We recall that a very similar strategy was used by Suri was also used for computing the Euclidean geodesic diameter~\cite{s-cgfnsp-89,hs-msspm-97}. This motivates the \emph{restricted farthest neighbor} problem:
Given two disjoint chains of vertices of $P$, $U = (u_1, \ldots, u_p)$ and
$W = (w_1, \ldots, w_m)$ that together partition the vertices of $P$,
where the vertices $u_1, \ldots, u_p,$ $w_1, \ldots, w_m$ are ordered counterclockwise
and $p+m = n$, find a farthest vertex on $W$ from each $u\in U$.
With respect to the Euclidean geodesic distance,
Suri~\cite{s-cgfnsp-89} presented an $O(n\log n)$-time algorithm for the problem,
and later Hershberger and Suri~\cite{hs-msspm-97} improved it to $O(n)$ time
based on the matrix searching technique by Aggarwal et al.~\cite{akmsw-gamsa-87}.
In the following, we show
with Fact~\ref{fact:euc_simple}
that the method of Hershberger and Suri~\cite{hs-msspm-97}
can be applied to solve the problem with respect to the $L_1$ geodesic distance $d$.
\begin{lemma}%[Hershberger and Suri~\cite{hs-msspm-97}]
\label{lem:farthest_vertex_prob}
 Let $U$ and $W$ be two disjoint chains of vertices of $P$ that together partition
 the vertices of $P$.
 One can compute in $O(n)$ time a farthest vertex over $w\in W$ for every $u\in U$
 with respect to the $L_1$ geodesic distance $d$.
\end{lemma}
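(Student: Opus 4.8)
The goal is to reduce the $L_1$ restricted farthest-neighbor problem to the Euclidean one, so that the linear-time matrix-searching machinery of Hershberger and Suri~\cite{hs-msspm-97} applies verbatim. The engine behind their algorithm is the fact that the matrix $M[i][j] := d_2(u_i, w_j)$ of pairwise Euclidean geodesic distances, where $u_1,\dots,u_p$ and $w_1,\dots,w_m$ are the two chains in their respective boundary orders, is \emph{inverse Monge} (totally monotone), which allows the row maxima to be extracted in $O(p+m) = O(n)$ time via the SMAWK-type algorithm of Aggarwal et al.~\cite{akmsw-gamsa-87}. The plan is to show that the corresponding $L_1$ matrix $M_1[i][j] := d(u_i, w_j)$ enjoys \emph{exactly the same} total-monotonicity structure, and hence feeds into precisely the same algorithm with the same running time.

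**Key steps.**
First I would invoke Fact~\ref{fact:euc_simple}: for every pair $(u_i, w_j)$ the Euclidean shortest path $\pi_2(u_i,w_j)$ is also an $L_1$ shortest path, so $d(u_i,w_j) = |\pi_2(u_i,w_j)|$, the $L_1$ length of that single fixed path. The essential point is that the \emph{combinatorial} structure underlying the matrix is identical in both metrics: the same geodesics realize the distances, only their lengths are measured differently. Second, I would recall why the Euclidean matrix is totally monotone. This follows from the non-crossing property of shortest paths in a simple polygon: for $i<i'$ and $j<j'$ (in the boundary orders), the geodesics $\pi_2(u_i,w_{j'})$ and $\pi_2(u_{i'},w_j)$ must cross, which yields the quadrangle inequality $d_2(u_i,w_j)+d_2(u_{i'},w_{j'}) \le d_2(u_i,w_{j'})+d_2(u_{i'},w_j)$ via the triangle inequality applied at the crossing point (this is the same crossing-plus-triangle-inequality argument already used in the proof of Lemma~\ref{lem:three_chains}). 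Third, and this is the crux, I would observe that every ingredient of this argument survives the switch to $L_1$: the non-crossing/crossing property is purely topological and metric-independent, and the triangle inequality holds for the $L_1$ geodesic distance $d$ just as for $d_2$. Therefore the same quadrangle inequality holds for $M_1$, so $M_1$ is totally monotone, and the Hershberger--Suri algorithm computes all $p$ row maxima in $O(n)$ time.

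**Main obstacle.**
The hard part is not any single computation but making the reduction airtight: one must verify that the Hershberger--Suri algorithm uses \emph{only} the total monotonicity of the distance matrix (together with the linear-time shortest-path-map preprocessing), and does not secretly exploit Euclidean-specific geometry such as uniqueness of shortest paths or analytic properties of $d_2$. I would address this by noting that the $L_1$ geodesic distance from a fixed source can be evaluated, via the shortest path map $\spm(s)$ and Fact~\ref{fact:euc_simple}, in the same $O(\log n)$ query time as in the Euclidean case (as remarked after Fact~\ref{fact:euc_simple}), so every matrix entry the algorithm probes is computable within the same budget; the only substantive abstract requirement, total monotonicity, is established above. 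With these two facts in hand the $L_1$ problem is an instance of the same abstract framework, and the $O(n)$ bound carries over unchanged.
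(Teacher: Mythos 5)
Your proposal follows essentially the same route as the paper: define the implicit matrix $M(i,j)=d(u_i,w_j)$, establish total monotonicity via the crossing-paths-plus-triangle-inequality (quadrangle) argument, note that this argument is metric-independent, and then invoke the matrix-searching framework of Aggarwal et al.\ as implemented by Hershberger and Suri, using Fact~\ref{fact:euc_simple} to transfer all Euclidean shortest-path structures to the $L_1$ setting. That is exactly the paper's proof.

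One detail in your justification of the running time does not hold up as stated. You argue that each probed matrix entry can be evaluated in $O(\log n)$ time via a shortest path map, and that therefore the $O(n)$ bound carries over; but $O(n)$ probes at $O(\log n)$ each gives only $O(n\log n)$, and moreover building $\spm(w_j)$ for every source $w_j$ that the algorithm queries would itself be too expensive. The linear bound in Hershberger and Suri does not come from independent point-location queries: it comes from an amortized scheme built on the two shortest path trees $\spt(w_1)$ and $\spt(w_m)$ together with funnel structures, which lets all $O(n)$ required entries be evaluated in total $O(n)$ time. The paper's proof cites precisely this machinery and observes, via Fact~\ref{fact:euc_simple}, that the identical data structures and operations work when path lengths are measured in $L_1$ rather than $L_2$. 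Your conclusion is correct, and your ``main obstacle'' paragraph correctly identifies what needs checking, but the mechanism you offer for the entry evaluations should be replaced by the amortized funnel-based one.
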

\newcommand{\pfFarthestVertex}{
\begin{proof}
Define $M$ to be a $p \times m$ matrix such that
$M(i, j) := d(u_i, w_j)$ for $1\leq i \leq p$ and $1\leq j \leq m$.
We claim that $M(i,k)<M(i,l)$ implies
$M(j,k)<M(j,l)$ for any $1\leq i<j\leq p$ and $1\leq k<l\leq m$.
that is, $M$ is \emph{totally monotone}~\cite{akmsw-gamsa-87}.
Thus, the problem is to find the row-wise maxima in the totally monotone matrix $M$.

Indeed, the above claim follows from the fact that the counterclockwise order of the four vertices must be $w_k$, $w_l$, $u_i$, and $u_j$. In particular, the paths $\pi_2(u_i,w_k)$ and $\pi_2(u_j,w_l)$ must cross, and thus the triangular inequality implies $d(u_j,w_k)+d(u_i,w_l)\leq d(u_j,w_l)+d(u_i,w_k)$. This implies that we cannot have both $M(i,k)<M(i,l)$ and $M(j,l)<M(j,k)$ or we would have a contradiction. Note that this proof only uses the triangular inequality (for geodesic $L_1$ paths), hence it holds for other metrics as well. Indeed, this fact was previously shown by Hershberger and Suri~\cite{hs-msspm-97} for the Euclidean case.

Aggarwal et al.~\cite{akmsw-gamsa-87} proved that
the row-wise maxima of a totally monotone matrix
can be computed in $O(n)$ comparisons and evaluations of matrix entries.
The matrix $M$ is implicitly defined and each entry will be evaluated
only when needed.
Hershberger and Suri~\cite{hs-msspm-97} showed that
$O(n)$ evaluations of $M(i, j)$ can be done in total $O(n)$ time.
The main structures used in their algorithm are
the two \emph{shortest path trees} $\spt(w_1)$ and $\spt(w_m)$
rooted at the vertices $w_1$ and $w_m$, respectively,
and the funnel structures.
The shortest path tree $\spt(s)$ rooted at a point $s\in P$
is a plane tree on the vertices of $P$ plus the root $s$
such that the path in $\spt(s)$ from $s$ to any vertex $v$ of $P$
is actually the Euclidean shortest path $\pi_2(s, v)$.
It is known that the shortest path tree $\spt(s)$ can be computed
in linear time~\cite{ghlst-ltavspptsp-87}.

Fact~\ref{fact:euc_simple} implies that these structures can also be used
for the $L_1$ geodesic distance $d$ and the corresponding operations on them
can be performed in the same time bound without modifying the structures,
but by replacing the Euclidean geodesic distance by the $L_1$ geodesic distance $d$.
For example, we can use exactly the same
data structures as in~\cite{hs-msspm-97} for computing tangents on funnels.
Therefore, the algorithm of Hershberger and Suri applies to the $L_1$ case,
and solves the restricted farthest neighbor problem in linear time.
\end{proof}
}
\ShoLong{}{\pfFarthestVertex}

We are now ready to conclude this section with a linear-time algorithm.
We first find $v_a$ and $v_b$ such that
$v_a$ is a farthest neighbor of $v_1$ and $v_b$ is a farthest neighbor of $v_a$.
This can be done in $O(n)$ time by computing the shortest path maps
$\spm(v_1)$ and then $\spm(v_a)$ with the algorithm of Guibas et al.~\cite{ghlst-ltavspptsp-87}
and Fact~\ref{fact:euc_simple}.
We then have the three chains $U_1, U_2, U_3$ and their complements
$W_1, W_2, W_3$.
Next, we apply Lemma~\ref{lem:farthest_vertex_prob} to solve
the three instances $(U_i, W_i)$ for $i=1,2,3$ of the restricted farthest neighbor problem,
resulting in a farthest neighbor of each vertex of $P$ by Lemma~\ref{lem:three_chains}.
Corollary~\ref{coro:diam_pair} guarantees that the maximum over the $n$ pairs of vertices
is a diametral pair of $P$.
All the effort in the above algorithm is bounded by $O(n)$ time.
We finally conclude with the main result of this section.
\begin{theorem} \label{thm:diameter}
 The $L_1$ geodesic diameter of a simple polygon with $n$ vertices  can be computed in $O(n)$ time,
 along with a pair of vertices that is diametral.
\end{theorem}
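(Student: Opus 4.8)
The plan is to assemble the structural results already in hand into a single linear-time procedure, so that the proof reduces to verifying correctness and bounding the running time of each stage. By Corollary~\ref{coro:diam_pair}, the diameter $\diam(P)$ is attained by a pair of \emph{vertices}, so it suffices to compute, for every vertex of $P$, a farthest vertex, and then report the maximum over these $n$ candidate pairs. The crucial enabler throughout is Fact~\ref{fact:euc_simple}, which lets me reuse the Euclidean shortest-path-map and shortest-path-tree machinery verbatim for the $L_1$ distance $d$, since every Euclidean shortest path is simultaneously an $L_1$ shortest path.

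First I would compute $\spm(v_1)$ in $O(n)$ time via the algorithm of Guibas et al.~\cite{ghlst-ltavspptsp-87}, evaluate $d(v_1,\cdot)$ at all vertices, and extract a farthest neighbor $v_a$ of $v_1$; by Lemma~\ref{lem:farthest_neighbor} such a farthest neighbor can be taken to be a vertex. I would then compute $\spm(v_a)$ and, in the same way, extract a vertex $v_b$ that is a farthest neighbor of $v_a$. The three vertices $v_1, v_a, v_b$ split $\bd P$ into the chains $U_1, U_2, U_3$ with complements $W_1, W_2, W_3$ defined just above the statement.

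Next, by Lemma~\ref{lem:three_chains}, every $u\in U_i$ has at least one farthest neighbor lying in $W_i$; hence computing a farthest vertex of each boundary vertex over all of $\bd P$ reduces to solving the three instances $(U_i,W_i)$ of the restricted farthest-neighbor problem. Each instance is solved in $O(n)$ time by Lemma~\ref{lem:farthest_vertex_prob}. Finally I would take the maximum of $d(u,w)$ over the $n$ resulting pairs; Corollary~\ref{coro:diam_pair} guarantees this maximum equals $\diam(P)$ and that the realizing pair is diametral. Since each stage costs $O(n)$ and there are only a constant number of stages, the total running time is $O(n)$, as claimed.

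In this formulation the theorem carries essentially no new difficulty: all the substance has been front-loaded into the lemmas. The delicate point—already discharged by Lemma~\ref{lem:farthest_vertex_prob}—is that the geodesic-distance matrix $M(i,j)=d(u_i,w_j)$ is totally monotone, so that the Aggarwal et al.~\cite{akmsw-gamsa-87} matrix-searching technique of Hershberger and Suri~\cite{hs-msspm-97} applies once Euclidean lengths are replaced by $L_1$ lengths. If I had to name the main obstacle of the whole development, it would be precisely the combination of that total-monotonicity claim (via the crossing/triangle-inequality argument) with the chain reduction of Lemma~\ref{lem:three_chains}; with both established, the present theorem is a routine composition of the pieces.
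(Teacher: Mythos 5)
Your proposal is correct and follows essentially the same route as the paper: compute $\spm(v_1)$ and $\spm(v_a)$ to obtain $v_a$ and $v_b$, split $\bd P$ into the chains $U_i$ with complements $W_i$, solve the three restricted farthest-neighbor instances via Lemma~\ref{lem:farthest_vertex_prob}, and take the maximum over the resulting $n$ pairs, justified by Lemma~\ref{lem:three_chains} and Corollary~\ref{coro:diam_pair}. Nothing is missing; the paper's own proof is exactly this composition of the preceding lemmas.
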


%%%%%%%%%%%%%%%%%%%%%%%%%%%%%%%%%%%%%%%%%%
\section{The $L_1$ Geodesic Center}
\label{sec:center}
%%%%%%%%%%%%%%%%%%%%%%%%%%%%%%%%%%%%%%%%%%

In this section, we study the $L_1$ geodesic radius $\rad(P)$ and center
$\geocen(P)$ of a simple polygon $P$, and
present a simple algorithm that computes $\geocen(P)$ in linear time.
%There are two remarkable results related to our work:
%Pollack et al.~\cite{psr-cgcsp-89} presented an $O(n \log n)$-time algorithm
%that computes the Euclidean geodesic center,
%and Schuierer~\cite{s-cl1dcsrp-94} presented an $O(n)$-time algorithm
%that computes the $L_1$ center $\geocen(P)$ only when $P$ is \emph{rectilinear}.
%The algorithm by Pollack et al. can be adopted to the $L_1$ case,
%based on the $L_1$ analogies, which will be shown in the following,
%but comes up with the running time $O(n \log n)$.
%On the other hand, extending the algorithm by Schuierer to general simple polygons
%is not straightforward as it heavily exploits properties derived from the rectilinearity.
%In the following, we present a simple algorithm that computes
%the center $\geocen(P)$ in linear time.

%Since the geodesic distance from any point $p$ to a point $s\in \geocen(P)$ is at most $\rad(P)$, we have that $\geocen(P) \subseteq \geoball_p(\rad(P))$ for any $p\in P$. Consider the intersection of all geodesic balls $\geoball_p(\rad(P))$ over $p\in P$. We then observe that $\geocen(P) = \bigcap_{p \in P} \geoball_p(\rad(P))$; for any $q$ in the intersection, we have $d(p, q) \leq \rad(P)$ for any $p\in P$ and, by the definition of $\rad(P)$, the equality $\far(q) = \rad(P)$ must hold.

%This gives us another insight on the radius $\rad(P)$.
Consider the geodesic balls $\geoball_p(r)$ centered at all points $p\in P$ with radius $r$, and imagine  their intersection as $r$ grows continuously. By definition, the first nonempty intersection happens when $r = \rad(P)$.
%that is, $r = \rad(P)$ is the smallest radius such that
%$\bigcap_{p\in P} \geoball_p(r) \neq \emptyset$.
Equivalently, by Theorem~\ref{thm:geoball_helly}, $r = \rad(P)$ is the smallest radius such that $\geoball_p(r) \cap \geoball_q(r) \neq \emptyset$ for any $p, q\in P$.
%From the triangular
%More precisely, we prove the following lemma.

%Note that the geodesic center $\geocen(P)$ is equal to the set of points $s$
%such that $\far(s) = \rad(P)$.

\begin{lemma}\label{lem:rad=diam/2}
 For any simple polygon $P$, it holds that $\rad(P) = \diam(P)/2$.
\end{lemma}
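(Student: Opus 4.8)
The plan is to prove the two inequalities $\rad(P) \le \diam(P)/2$ and $\rad(P) \ge \diam(P)/2$ separately. The second inequality is the easy direction and follows from the triangle inequality alone: if $c$ is any center and $(v_1, v_2)$ is a diametral pair, then $\diam(P) = d(v_1, v_2) \le d(v_1, c) + d(c, v_2) \le 2\far(c) = 2\rad(P)$, where I use that $\far(c) = \rad(P)$ by definition of the center. This gives $\rad(P) \ge \diam(P)/2$ with essentially no geometric input.

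The substantive direction is $\rad(P) \le \diam(P)/2$, and here I would use the Helly-type theorem (Theorem~\ref{thm:geoball_helly}). The idea is to exhibit a point that is within geodesic distance $\diam(P)/2$ of every point of $P$; by the remark preceding the lemma, it suffices to show that the family $\{\geoball_p(\diam(P)/2) : p \in P\}$ has a common point, and by Theorem~\ref{thm:geoball_helly} this reduces to showing that every \emph{pair} of these balls intersects. So I would fix two arbitrary points $p, q \in P$ and argue that $\geoball_p(\diam(P)/2) \cap \geoball_q(\diam(P)/2) \neq \emptyset$.

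To produce a point in that pairwise intersection, the natural candidate is the \emph{midpoint of a shortest path} between $p$ and $q$ — consistent with the abstract's statement that the $L_1$ center consists of midpoints of shortest paths between diametral pairs. Concretely, I would take an $L_1$ shortest path $\pi$ from $p$ to $q$ (for instance $\pi_2(p,q)$, which is $L_1$-shortest by Fact~\ref{fact:euc_simple}), and let $m$ be the point on $\pi$ that bisects it, so that $d(p,m) = d(m,q) = d(p,q)/2$. Since $\pi$ is a shortest path, its sub-paths are shortest as well, giving these two equalities. Because $d(p,q) \le \diam(P)$, we have $d(p,m) \le \diam(P)/2$ and $d(q,m) \le \diam(P)/2$, so $m$ lies in both balls. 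Hence every pair of balls in the family meets.

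The main obstacle — and the reason the Helly theorem is essential — is passing from pairwise intersection to a global common point. Pairwise intersection is completely elementary, but the jump to a point simultaneously close to \emph{all} of $P$ is exactly where the Euclidean argument breaks down (three Euclidean balls can pairwise meet with empty common intersection), and it is what Theorem~\ref{thm:geoball_helly} buys us. I would therefore structure the writeup so that the midpoint construction establishes the pairwise hypothesis cleanly, invoke Theorem~\ref{thm:geoball_helly} to get a point $c^* \in \bigcap_{p \in P} \geoball_p(\diam(P)/2)$, and conclude $\far(c^*) = \max_{p} d(c^*, p) \le \diam(P)/2$, whence $\rad(P) \le \far(c^*) \le \diam(P)/2$. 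Combining the two inequalities yields $\rad(P) = \diam(P)/2$.
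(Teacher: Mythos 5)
Your proof is correct and follows essentially the same route as the paper: the triangle inequality gives $\rad(P)\ge\diam(P)/2$, and the reverse inequality comes from Theorem~\ref{thm:geoball_helly} applied to the family $\{\geoball_p(\diam(P)/2)\}_{p\in P}$, with shortest-path midpoints witnessing the pairwise intersections. The paper phrases the Helly step contrapositively (for $r<\rad(P)$ some pair of balls must be disjoint) and leaves the midpoint argument implicit, so your direct version is, if anything, slightly more complete.
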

\newcommand{\pfRadDiam}{
\begin{proof}
Before giving the proof, we note that Schuierer~\cite{s-cl1dcsrp-94} claimed Lemma~\ref{lem:rad=diam/2} %and he also claimed a variation of the upcoming Corollary~\ref{coro:geocen_shape} (only for rectilinear polygons),
but no proof was given. In this paper, we provide a proof based on the Helly-type theorem for $L_1$ geodesic balls (Theorem~\ref{thm:geoball_helly}). It is worth mentioning that Theorem~\ref{thm:helly_simple} was also used to prove a similar relation between the diameter and center with respect to the rectilinear link distance~\cite{km-sbrlrsrp-11}.

By the triangle inequality we have $\rad(P) \geq \diam(P)/2$. Thus, it suffices to show the reverse direction. For any $r>0$, let $\mathcal{B}(r) := \{ \geoball_p(r) \mid p \in P\}$
be the family of $L_1$ geodesic balls with radius $r$.
Also, let $U(r) := \bigcap_{B \in \mathcal{B}(r)} B$ be their intersection.
As discussed above, $U(r) = \emptyset$ for $r<\rad(P)$
and $U(r) \neq \emptyset$ for $r\geq \rad(P)$.
Theorem~\ref{thm:geoball_helly}, together with the the above discussion,
tells us that any two members of $\mathcal{B}(r)$ have a nonempty intersection
only when $r \geq \rad(P)$.
This implies that for any $r<\rad(P)$
there are two points $p, q\in P$ such that $\geoball_p(r) \cap \geoball_q(r) = \emptyset$.

%On the other hand, for any two points $p, q\in P$,
%the two balls $\geoball_p(r)$ and $\geoball_q(r)$ have a nonempty intersection
%if and only if $r \geq d(p, q)/2$.
%This implies that at $r = \diam(P)/2$
%every pair of two members of $\mathcal{B}(r)$ have a nonempty intersection.
%Therefore, $\rad(P) = \diam(P)/2$.
%For any two vertices $u$ and $v$ of $P$,
%the smallest value $r$ such that $\geoball_u(r) \cap \geoball_v(r)$
%is nonempty is obviously half the geodesic distance between $u$ and $v$,
%that is, $d(u, v)/2$.
%Let $v_1, v_2$ be two vertices of $P$ such that $d(v_1, v_2) = \diam(P)$.
%By Corollary~\ref{coro:diam_pair}, such $v_1$ and $v_2$ exist.
%Then, the smallest value $r$ such that $\geoball_{v_1}(r) \cap \geoball_{v_2}(r) \neq \emptyset$
%is exactly $\diam(P) / 2$.
%This implies that $\rad(P) = \diam(P) / 2$
%since no pair of vertices can have bigger geodesic distance than $\diam(P)$.
\end{proof}
}
\ShoLong{}{\pfRadDiam}

Clearly $\geocen(P) \subseteq \geoball_p(diam(P)/2) \cap \geoball_q(diam(P)/2)$ for any diametral pair $p,q\in P$. Since the intersection of these two balls can only be a segment, we obtain the following description of $\geocen(P)$.

\begin{corollary} \label{coro:geocen_shape}
 The $L_1$ geodesic center $\geocen(P)$ forms a line segment
 of slope $\pm 1$, unless it degenerates to a point.
\end{corollary}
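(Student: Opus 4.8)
The plan is to derive Corollary~\ref{coro:geocen_shape} as a direct consequence of the radius-diameter relation (Lemma~\ref{lem:rad=diam/2}) together with the explicit shape of $L_1$ geodesic balls established in Observation~\ref{obs:geoball1} and Corollary~\ref{coro:geoball_pconvex1}. First I would fix a diametral pair $(p,q)$ of $P$, which exists by Corollary~\ref{coro:diam_pair}, and set $r := \diam(P)/2 = \rad(P)$. The key containment is that every geodesic center must be at distance at most $r$ from both $p$ and $q$: indeed, for any $c \in \geocen(P)$ we have $d(c,p) \le \far(c) = \rad(P) = r$ and likewise $d(c,q) \le r$, so $\geocen(P) \subseteq \geoball_p(r) \cap \geoball_q(r)$, exactly as stated in the excerpt. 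Thus it suffices to understand the intersection of these two particular balls.

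Next I would analyze $\geoball_p(r) \cap \geoball_q(r)$. The essential point is that this intersection is \emph{small}: it cannot have nonempty interior. To see this, suppose for contradiction that the intersection contained an open set; then some point $c$ would satisfy $d(c,p) < r$ and $d(c,q) < r$ strictly. But then $d(p,q) \le d(p,c) + d(c,q) < 2r = \diam(P)$ by the triangle inequality for the geodesic distance, contradicting that $(p,q)$ is diametral. Hence $\geoball_p(r) \cap \geoball_q(r)$ has empty interior, so it lies entirely on the boundary of one of the balls. By Corollary~\ref{coro:geoball_pconvex1}, each connected component of $\bd \geoball_p(r) \cap \intr P$ is a convex polygonal curve consisting of segments of slope $+1$ or $-1$, and the same holds for $\geoball_q(r)$. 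The intersection is therefore contained in the overlap of two such convex curves.

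The remaining task is to argue that this overlap is a single segment of slope $\pm 1$ (or a point). Here I would use convexity together with the slope restriction: both bounding curves are convex and built from segments of slope $\pm 1$, and by Lemma~\ref{lem:geoball_pconvex} each ball is $P$-convex, so by Lemma~\ref{lem:P-convex} the intersection $\geoball_p(r) \cap \geoball_q(r)$ is itself $P$-convex, hence path-connected and meeting every segment in a connected subset. A $P$-convex set with empty interior that lies on boundary curves made of $\pm 1$ slopes must be a connected subset of a single line of slope $\pm 1$: if it contained two points not on a common $\pm 1$ line, $P$-convexity would force the shortest path between them (and a neighborhood, contradicting emptiness of interior, or else a non-$\pm 1$ segment) into the set, which conflicts with its containment in the intersection of two oppositely-curving convex chains. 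Since the set is connected and contained in such a line, it is a single segment of slope $\pm 1$, degenerating to a point in the boundary case.

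The main obstacle I anticipate is the final step: cleanly ruling out the possibility that the intersection is a more complicated $P$-convex set (for instance, a connected union of segments of \emph{both} slopes $+1$ and $-1$) rather than a single segment. The resolution hinges on the fact that $\geoball_p(r)$ and $\geoball_q(r)$ curve toward their respective centers $p$ and $q$, so where their boundaries coincide they must do so along a stretch where one ball's convex chain and the other's agree; two convex chains of slope $\pm 1$ that bound opposite half-planes can share at most a single common segment. Making this intersection-of-convex-curves argument rigorous — and confirming it does not produce a bent polyline — is the delicate part, but everything needed is already supplied by the $P$-convexity of the balls and the slope characterization in Corollary~\ref{coro:geoball_pconvex1}.
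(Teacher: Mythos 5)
Your proposal is correct and follows essentially the same route as the paper: the paper likewise observes that $\geocen(P)\subseteq \geoball_p(\diam(P)/2)\cap\geoball_q(\diam(P)/2)$ for a diametral pair $(p,q)$ and then invokes the slope-$\pm 1$ convex-boundary structure of geodesic balls (Observation~\ref{obs:geoball1}, Corollary~\ref{coro:geoball_pconvex1}) to conclude the intersection is a segment. Your added triangle-inequality argument (which in fact gives the sharper statement that every point $c$ of the intersection satisfies $d(p,c)=d(c,q)=r$, pinning the set to both boundary level curves) is a useful explicit justification that the paper leaves implicit, and the final ``two oppositely-curving convex $\pm 1$-chains meet in a single segment'' step that you flag as delicate is treated with the same brevity in the paper itself.
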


Before explicitly computing $\geocen(P)$, we need a technical lemma. %, which is proven with Corollary~\ref{coro:dist_seg}.

%We start with a couple of geometric observations.
\begin{lemma} \label{lem:intersection_of_balls}
 Let $a, b \in P$ be any two points with $\seg{ab} \subset P$.
 Then, for any $r>0$, it holds that
  $\geoball_a(r) \cap \geoball_b(r) = \bigcap_{s\in \seg{ab}} \geoball_s(r)$.
  %\[\geoball_a(r) \cap \geoball_b(r) = \bigcap_{s\in \seg{ab}} \geoball_s(r).\]
\end{lemma}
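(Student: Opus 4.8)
The plan is to establish the two set inclusions separately. The inclusion $\bigcap_{s\in \seg{ab}} \geoball_s(r) \subseteq \geoball_a(r) \cap \geoball_b(r)$ is immediate: since both $a$ and $b$ lie on $\seg{ab}$, the intersection taken over \emph{all} $s \in \seg{ab}$ is in particular contained in the intersection of the two balls centered at the endpoints $a$ and $b$. All the substance therefore lies in the reverse inclusion $\geoball_a(r) \cap \geoball_b(r) \subseteq \bigcap_{s\in \seg{ab}} \geoball_s(r)$.

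For the reverse inclusion, I would fix an arbitrary point $x \in \geoball_a(r) \cap \geoball_b(r)$, so that $d(a,x) \leq r$ and $d(b,x) \leq r$, and aim to show that $d(s,x) \leq r$ for every $s \in \seg{ab}$. The key tool is Corollary~\ref{coro:dist_seg}: because $\seg{ab} \subset P$ is a line segment and the geodesic distance is symmetric, the function $g(s) := d(x, s) = d(s, x)$ over $s \in \seg{ab}$ is quasiconvex.

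The crux of the argument is then the elementary fact that a quasiconvex function attains its maximum over an interval at one of its endpoints. Concretely, set $M := \max\{d(x,a), d(x,b)\} \leq r$. By quasiconvexity the sublevel set $\{s \in \seg{ab} \mid g(s) \leq M\}$ is convex, and it contains both endpoints $a$ and $b$; hence it contains the entire segment $\seg{ab}$. Consequently $d(x,s) = g(s) \leq M \leq r$ for every $s \in \seg{ab}$, which is exactly the statement that $x \in \geoball_s(r)$ for all such $s$, and therefore $x \in \bigcap_{s \in \seg{ab}} \geoball_s(r)$. Combining the two inclusions yields the claimed equality.

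I do not anticipate a serious obstacle once quasiconvexity is invoked, as the remainder reduces to the convexity of a single sublevel set. The two points that require mild care are the symmetry of $d$ (so that treating $x$ as the fixed source while $s$ ranges over $\seg{ab}$ is legitimate, allowing the direct application of Corollary~\ref{coro:dist_seg}) and the hypothesis $\seg{ab} \subset P$, which both guarantees that Corollary~\ref{coro:dist_seg} applies to this segment and ensures that each $\geoball_s(r)$ is a well-defined geodesic ball with center $s \in P$.
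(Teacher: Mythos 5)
Your proposal is correct and follows essentially the same route as the paper's proof: the trivial inclusion, then for a point $p$ in $\geoball_a(r)\cap\geoball_b(r)$, apply Corollary~\ref{coro:dist_seg} to $d(p,\cdot)$ on $\seg{ab}$ and use quasiconvexity to bound $d(p,s)$ by $\max\{d(p,a),d(p,b)\}\leq r$. Your explicit sublevel-set argument is just a slightly more detailed rendering of the same step.
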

\newcommand{\pfIntersectionBalls}{
\begin{proof}
Obviously, we have
$\bigcap_{s\in \seg{ab}} \geoball_s(r)\subseteq \geoball_a(r) \cap \geoball_b(r)$.
In the following, we show the opposite inclusion,
$\geoball_a(r) \cap \geoball_b(r) \subseteq \bigcap_{s\in \seg{ab}} \geoball_s(r)$. The statement is obviously true when $\geoball_a(r) \cap \geoball_b(r) =\emptyset$. Thus, from now on we assume that the intersection is nonempty. Pick any  point $p \in \geoball_a(r) \cap \geoball_b(r)$. Then, we have $d(p, a) \leq r$ and $d(p, b) \leq r$.
Consider the function $f(x) := d(p, x)$ over all points $x\in \seg{ab}$ on the segment $\seg{ab}$.
The function $f$ is quasiconvex by Corollary~\ref{coro:dist_seg},
and thus for any $x\in \seg{ab}$, $d(p,x) \leq \max\{d(p, a), d(p, b)\} \leq r$.
This implies that $p\in \geoball_x(r)$ for any $x\in \seg{ab}$,
so we conclude that $p\in \bigcap_{s\in \seg{ab}} \geoball_s(r)$.
\end{proof}
}
\ShoLong{}{\pfIntersectionBalls}

%Lemma~\ref{lem:intersection_of_balls} together with Lemma~\ref{lem:farthest_neighbor}
%implies that $\geocen(P) = \bigcap_{v \in V} \geoball_v(\rad(P))$,
%where $V$ denotes the set of vertices of $P$.
%Moreover, this observation together with Lemma~\ref{lem:rad=diam/2}
%implies that the set $\geocen(P)$ is a line segment in $P$.
%\complain{There was a corollary stating this explicitly but it was removed. Was it due to lack of space? If so I would like to put it back.}

\subsection{Computing the center in linear time}
Now, we describe our algorithm for computing $\geocen(P)$ in linear time.
We start by computing the diameter $\diam(P)$ and a diametral pair of vertices $(v_1, v_2)$
in $O(n)$ time by Theorem~\ref{thm:diameter}.
Then, we know that $\rad(P) = \diam(P)/2$ by Lemma~\ref{lem:rad=diam/2}.
Compute the intersection of the two geodesic balls $\geoball_{v_1}(\rad(P))$ and
$\geoball_{v_2}(\rad(P))$, which is a line segment of slope $1$ or
$-1$ (including the degenerate case in which it is a single point). Extend this line segment to a diagonal $\ell=\seg{ab}$, where $a, b\in \bd P$. Clearly, we have $\geocen(P)\subseteq \seg{ab}$, but for simplicity in the exposition we will search for the possible centres of $P$ within $\ell$.

This preprocessing can be done in linear time: first, we must compute the balls $\geoball_{v_1}(\rad(P))$ and $\geoball_{v_2}(\rad(P))$ by computing the shortest path maps
$\spm(v_1)$ and $\spm(v_2)$ and traversing the cells of the maps. The intersection can be found by a local search at the midpoint of $\pi_2(v_1, v_2)$
since it is guaranteed that $\geoball_{v_1}(\rad(P)) \cap \geoball_{v_2}(\rad(P))$ is a line segment
by Corollary~\ref{coro:geoball_pconvex1}.

%using on Corollary~\ref{coro:geocen_shape}
Since $\geocen(P) = \bigcap_{v \in V} \geoball_v(\rad(P))$
and $\geoball_{v_1}(\rad(P)) \cap \geoball_{v_2}(\rad(P))\subseteq \ell$,
we conclude that $\geocen(P) \subseteq \ell$.
The last task is thus to identify $\geocen(P)$ from $\ell$.
Here, we present a simple method based on further geometric observations.
Recall that for any $s\in P$ and any line segment $l\subset P$, the geodesic distance function $d(s, x)$ over $x\in l$ is quasiconvex as stated in Corollary~\ref{coro:dist_seg}.
A more careful analysis based on Fact~\ref{fact:l1length} gives us the following.
\begin{lemma} \label{lem:dist_seg_slope1}
 Given a point $s\in P$ and a line segment $\seg{ab} \subset P$ with slope $1$ or $-1$,
 let $f(x) = d(s, x)$ be the geodesic distance from $s$ to $x$ over $x\in \seg{ab}$.
 Then, there are two points $x_1, x_2\in \seg{ab}$ with $|\seg{ax_1}| \leq |\seg{ax_2}|$
 such that we have
 \[
    f(x) = \left\{\begin{array}{ll}
			d(s,a) - |\seg{ax}| & \text{if } x \in \seg{ax_1}\\
 			d(s,x_1) = d(s, x_2) & \text{if } x \in \seg{x_1x_2}\\
			d(s,b) - |\seg{bx}| & \text{if } x\in \seg{x_2b}.
		\end{array}\right.
  \]
 In particular, the function $f$ attains its minimum at any point $x\in \seg{x_1x_2}$.
% $f(x) = d(s, a) - |\seg{ax}|$ for $x\in \seg{ax_1}$;
% $f(x) = d(s, x_1) = \min_{x\in \seg{ab}} d(s, x)$ for $x\in \seg{x_1x_2}$;
% $f(x) = d(s, x_2) + |\seg{x_2x}|$ for $x\in \seg{x_2b}$.
\end{lemma}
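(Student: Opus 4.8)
The plan is to mimic the funnel analysis of Lemma~\ref{lem:dist_seg_horizontal} but now exploit that a segment of slope $\pm 1$ is \emph{parallel to a side of every $L_1$-ball}, which is exactly what forces the three linear pieces. First I would reduce to the case where $\seg{ab}$ has slope $+1$ (the slope $-1$ case follows by a reflection across a horizontal line) and orient the segment so that $a$ is its lower-left and $b$ its upper-right endpoint; note that every subsegment of $\seg{ab}$ is monotone, so by Fact~\ref{fact:l1length} and $\seg{ab}\subset P$ we have $d(a,x)=|\seg{ax}|$ and $d(b,x)=|\seg{bx}|$ for all $x\in\seg{ab}$. Since $f$ is quasiconvex by Corollary~\ref{coro:dist_seg}, it is unimodal along $\seg{ab}$, so its set of minimizers is a (possibly degenerate) subsegment $\seg{x_1x_2}$; taking $x_1,x_2$ to be its endpoints with $|\seg{ax_1}|\le|\seg{ax_2}|$ immediately yields the middle case $f\equiv f(x_1)=f(x_2)$ on $\seg{x_1x_2}$, together with the fact that $f$ attains its minimum exactly there.

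Next I would recover the piecewise-linear shape of $f$ from the funnel $F$ of shortest paths from $s$ to the points of $\seg{ab}$, built exactly as in the proof of Lemma~\ref{lem:dist_seg_horizontal}. The funnel decomposes $\seg{ab}$ into consecutive subintervals, each one seen by a single funnel vertex $v$ (or by the apex), so that $d(s,x)=d(s,v)+|\seg{vx}|$ for every $x$ in that subinterval. On such a subinterval $\seg{vx}$ is a single $L_1$ segment while $\seg{ab}$ has slope $+1$; since the level sets of $|\seg{v\cdot}|$ are $L_1$-balls whose sides have slope $\pm1$, Fact~\ref{fact:l1length} gives that $f(x)=d(s,v)+|\seg{vx}|$ is linear in arc length, with slope $-1$ when $v$ lies weakly up-right of $x$, slope $+1$ when $v$ lies weakly down-left of $x$, and slope $0$ otherwise. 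Thus $f$ is piecewise linear with all slopes in $\{-1,0,+1\}$, and combined with unimodality its slopes must occur in the order $-1,\dots,-1,\,0,\dots,0,\,+1,\dots,+1$; evaluating $f(a)=d(s,a)$ on the leading $(-1)$-run and $f(b)=d(s,b)$ on the trailing $(+1)$-run then produces exactly the claimed formulas $f(x)=d(s,a)-|\seg{ax}|$ on $\seg{ax_1}$ and $f(x)=d(s,b)-|\seg{bx}|$ on $\seg{x_2b}$.

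The one point that genuinely needs care — and which I expect to be the main obstacle — is ruling out a slope-$0$ (flat) piece occurring at a value strictly above $\min f$; such a piece would sit inside $\seg{ax_1}$ or $\seg{x_2b}$ and break the clean three-part description, since quasiconvexity alone permits a ``$-1,0,-1$'' pattern. To exclude it I would argue from the convexity of geodesic-ball boundaries: if $f\equiv c>\min f$ on a subinterval $I\subset\seg{ax_1}$, then $I$ is a full slope-$+1$ edge of a connected component $\gamma$ of $\bd\geoball_s(c)\cap\intr P$, which is a convex polygonal curve by Corollary~\ref{coro:geoball_pconvex1}. Since $\geoball_s(c)$ is $P$-convex by Lemma~\ref{lem:geoball_pconvex} and lies on one side of $\gamma$, the line through $\seg{ab}$ supports $\geoball_s(c)$ along $I$, so no point of that line near $I$ but outside $I$ can lie in $\intr\geoball_s(c)$ — contradicting $f<c$ on the part of $\seg{ab}$ lying between $I$ and $x_1$. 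With flat pieces above the minimum excluded, the slope is identically $-1$ on $\seg{ax_1}$ and identically $+1$ on $\seg{x_2b}$, which finishes the argument; the remaining steps are routine integration of the slopes via Fact~\ref{fact:l1length}.
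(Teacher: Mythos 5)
Your proof is correct, but it takes a genuinely different route from the paper's. The paper constructs $x_1$ and $x_2$ explicitly from the funnel geometry: $x_1$ is the vertical projection onto $\seg{ab}$ of the first vertex of the chain $C_1$ that is vertically visible to $\seg{ab}$, and $x_2$ is the horizontal projection of the analogous vertex of $C_2$; it then invokes results from an external reference to show that every shortest path to a point of $\seg{ax_1}$ passes through $x_1$, every shortest path to a point of $\seg{x_2b}$ passes through $x_2$, and every shortest path to a point of $\seg{x_1x_2}$ passes through one common point $v$, from which the three formulas follow by Fact~\ref{fact:l1length}. You instead define $x_1,x_2$ abstractly as the endpoints of the minimizer set (which quasiconvexity guarantees is a subsegment), read off from the funnel decomposition that $f$ is piecewise linear with slopes in $\{-1,0,+1\}$, and then do the real work of excluding a flat piece at a non-minimal level via the local convexity of $\bd\geoball_s(c)\cap\intr P$ (Corollary~\ref{coro:geoball_pconvex1}): at the endpoint of a maximal level-$c$ flat interval the boundary component must turn, the ball is locally a convex wedge, and the points of $\seg{ab}$ just beyond that endpoint toward the minimizer are outside the wedge, contradicting $f<c$ there. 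That exclusion step is the crux and it is sound; note only that your intermediate sentence claiming unimodality already forces the slope order $-1,\dots,0,\dots,+1$ overstates what quasiconvexity gives (it permits $-1,0,-1$), though you correctly identify and repair exactly this in the next paragraph, and that the ``supporting line'' statement is local rather than global since $\geoball_s(c)$ is only $P$-convex. The trade-off: your argument is self-contained within the paper's own toolkit (Fact~\ref{fact:l1length}, Corollaries~\ref{coro:dist_seg} and~\ref{coro:geoball_pconvex1}), avoiding the external citation, while the paper's version yields an explicit geometric construction of $x_1$ and $x_2$; for the downstream use in Lemma~\ref{lem:ell_v} only the three-piece shape and the values $d(s,a)$, $d(s,b)$ are needed, so either proof suffices.
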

\newcommand{\pfSegSlope}{
\begin{proof}
%This observation can be proven in a similar fashion to
%the proof of Lemma~\ref{lem:dist_seg_horizontal}.
Consider the union of all Euclidean shortest paths $\pi_2(s, x)$ from $s$ to
all $x\in \seg{ab}$, which forms a funnel $F$ with apex $u$
and base $\seg{ab}$ plus $\pi_2(s, u)$.
Without loss of generality, we assume that the segment $\seg{ab}$
has slope $1$, $a$ is lower than $b$, and the apex $u$ of $F$ lies above
the line supporting $\seg{ab}$ (e.g., see \figurename~\ref{fig:fig3}).
The other cases are all symmetric to this situation.
%Note that there is a point $x\in \seg{ab}$ such that $\seg{ux} \in F$.

\begin{figure}[tb]
  \center
  \includegraphics[width=.4\textwidth]{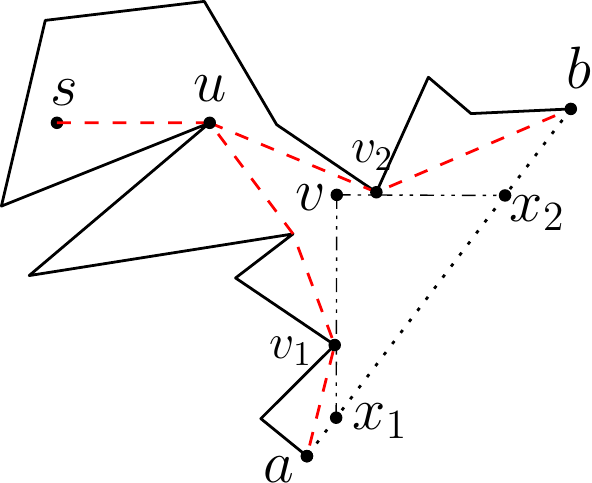}
  \caption{Illustration of the proof of Lemma~\ref{lem:dist_seg_slope1}:
  $C_1$ (resp., $C_2$) is the (red) chain connecting $u$ and $a$
  (resp., $b$). }
  \label{fig:fig3}
\end{figure}

Let $C_1$ and $C_2$ be the two concave chains of $F$,
where $C_1$ ends at $a$ and $C_2$ ends at $b$.
We define two particular vertices $v_1$ and $v_2$ on $C_1$ and $C_2$,
respectively, as follows. If we move on $C_1$ from $u$ to $a$, $v_1$
is the first point that is vertically visible to $\seg{ab}$.
If we move on $C_2$ from $u$ to $a$, $v_2$
is the first point that is horizontally visible to $\seg{ab}$.
We further define $x_1$ as the vertical projection of $v_1$ on $\seg{ab}$
and define $x_2$ as the horizontal projection of $v_2$ on $\seg{ab}$ (see \figurename~\ref{fig:fig3} for an illustration).

To prove the lemma, we use some results given in \cite{ciw-14}. First of all, we use the fact that $x_1$ is below
$x_2$. That is, point $x_1$ is contained in the line segment $\seg{ax_2}$  (Observation 1 of~\cite{ciw-14}).
Consider any point $x$ on the segment $\seg{ax_1}$. It has been proved
in~\cite{ciw-14} (Lemma 10) that there exists a shortest path from $s$ to $x$ in $P$ that passes through $x_1$.
This implies that $d(s,x)=d(s,x_1)+|\seg{x_1x}|$, and in particular it holds that $d(s,x)=d(s,a)-\seg{ax}$ (for any $x\in\seg{ax_1}$).
Similarly, the same holds for any point $x$ on the segment $\seg{bx_2}$.

It remains to to prove that $d(s,x)=d(s,x_1)=d(s,x_2)$ for any $x\in
\seg{x_1x_2}$. Let $v$ be the intersection of the vertical line
containing $\seg{v_1x_1}$ and the horizontal line containing
$\seg{v_2x_2}$ (e.g., see \figurename~\ref{fig:fig3}). It has been proved in
 \cite{ciw-14} (Lemma 11) that $v$ must be in the funnel $F$ (in fact,
 the entire triangle $\triangle vx_1x_2$ is in
 $F$) and for any point $x\in \seg{x_1x_2}$ there must be a
 shortest path from $s$ to $x$ in $P$ that passes through $v$. Thus, we have $d(s,x)=d(s,v)+|\seg{vx}|$ for any $x\in \seg{x_1x_2}$. Since $\seg{ab}$ has slope $1$, the value $|\seg{vx}|$ is constant for all $x\in \seg{x_1x_2}$. Thus, we obtain $d(s,x)=d(s,x_1)=d(s,x_2)$ for any $x\in \seg{x_1x_2}$, as claimed.
\end{proof}
}
\ShoLong{}{\pfSegSlope}

For any vertex $v \in V$ of $P$, let $\ell_v \subseteq \ell$ be
the intersection $\geoball_v(\rad(P)) \cap \ell$.
Since $\geocen(P) = \bigcap_{v \in V} \geoball_v(\rad(P))$ and
$\geocen(P)\subseteq \ell$,
it holds that $\geocen(P) = \bigcap_{v \in V} \ell_v$.

\begin{lemma}\label{lem:ell_v}
 For any vertex $v$ of $P$, $\ell_v$ can be computed
in $O(1)$ time, provided that $d(v, a)$ and $d(v, b)$ have been evaluated, where $a$ and $b$ are the endpoints of $\ell$.
\end{lemma}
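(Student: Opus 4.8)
The plan is to obtain $\ell_v$ directly from the explicit ``valley'' description of $f(x)=d(v,x)$ over $\seg{ab}$ supplied by Lemma~\ref{lem:dist_seg_slope1}. Since $\ell = \seg{ab}$ has slope $\pm 1$, that lemma gives points $x_1,x_2\in\seg{ab}$ (with $|\seg{ax_1}|\le|\seg{ax_2}|$) so that $f$ decreases at unit rate on $\seg{ax_1}$, is constant and minimal on $\seg{x_1x_2}$, and increases at unit rate on $\seg{x_2b}$. By definition $\ell_v = \geoball_v(\rad(P))\cap\ell = \{x\in\ell \mid f(x)\le\rad(P)\}$, which is the sublevel set of $f$ at height $\rad(P)$; by Corollary~\ref{coro:dist_seg} this set is a subsegment of $\ell$. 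Thus it suffices to locate its two endpoints in $O(1)$ time, and I will argue that the two given values $d(v,a)$ and $d(v,b)$ already determine them, without ever computing $x_1$, $x_2$, or the minimum of $f$.

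To do this I would parametrize $\ell$ by $t\in[0,L]$, where $t=|\seg{ax}|$ and $L:=|\seg{ab}|$ (computable in $O(1)$, as $L$ is twice a coordinate difference for a slope-$\pm1$ segment). On the left monotone piece Lemma~\ref{lem:dist_seg_slope1} gives $f = d(v,a)-t$, and on the right monotone piece $f = d(v,b)-(L-t)$. Solving $f=\rad(P)$ on each piece and clamping to $[0,L]$ yields the two candidate endpoints $t_L := \max\{0,\; d(v,a)-\rad(P)\}$ and $t_R := \min\{L,\; L-(d(v,b)-\rad(P))\}$, and the claim is that $\ell_v$ is exactly the sub-interval of $\ell$ with parameter range $[t_L,t_R]$. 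Each quantity is evaluated from $d(v,a)$, $d(v,b)$, $\rad(P)$, and $L$ in constant time, so the stated time bound follows once the claim is verified.

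The main obstacle is to justify that these endpoints always land on the correct (monotone) pieces, so that the flat middle portion $\seg{x_1x_2}$ is automatically contained in $[t_L,t_R]$. Writing $m$ for the minimal value of $f$, we have at the start of the flat part $d(v,a) = m + |\seg{ax_1}|$, whence $t_L = m + |\seg{ax_1}| - \rad(P)$; the key inequality $m\le\rad(P)$ then gives $t_L\le|\seg{ax_1}|$, placing $t_L$ on the decreasing piece (or at $a$ after clamping), and symmetrically $t_R\ge|\seg{ax_2}|$. The inequality $m\le\rad(P)$ holds because $\geocen(P)\subseteq\ell$ is nonempty and $\geocen(P)=\bigcap_{v\in V}\ell_v$, so every $\ell_v$ contains the center and is therefore nonempty; hence $f$ attains a value at most $\rad(P)$ on $\ell$. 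With this, the monotone formulas are valid at $t_L$ and $t_R$, the constant middle lies between them, and quasiconvexity (Corollary~\ref{coro:dist_seg}) guarantees no other components, so $\ell_v = \{x(t)\mid t_L\le t\le t_R\}$ as claimed, computed in $O(1)$ time.
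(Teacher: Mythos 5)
Your proposal is correct and follows essentially the same route as the paper's proof: both use the three-piece description of $d(v,\cdot)$ on $\ell$ from Lemma~\ref{lem:dist_seg_slope1}, invoke $\geocen(P)\neq\emptyset$ (hence $\ell_v\neq\emptyset$, i.e.\ the minimum of the distance function on $\ell$ is at most $\rad(P)$) to conclude that both endpoints of the sublevel set lie on the two monotone pieces, and then read off the endpoints from $d(v,a)$ and $d(v,b)$ in constant time. You merely make explicit the clamping arithmetic that the paper leaves implicit.
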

\newcommand{\pfEllV}{
\begin{proof}
Let $g_v(x) := d(v, x)$ over $x\in \ell$.
By Lemma~\ref{lem:dist_seg_slope1},
the graph of $g_v$ consists of at most three line segments
whose slopes are $-1$, $0$, and $1$ in order as $|\seg{ax}|$ increases.
Let $x_1, x_2\in \ell = \seg{ab}$ be the breakpoints of $g_v$ as defined in
Lemma~\ref{lem:dist_seg_slope1}.
\ShoLong{

}{See \figurename~\ref{fig:ell_v} for an illustration.

\begin{figure}[t]
  \center
  \includegraphics[width=.5\textwidth]{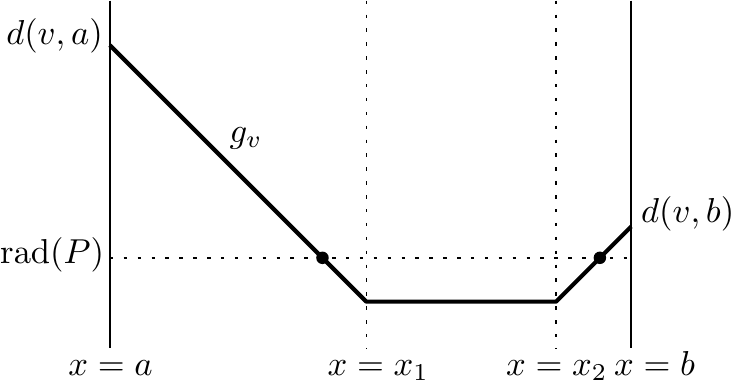}
  \caption{Illustration of the proof of Lemma~\ref{lem:ell_v}. }
  \label{fig:ell_v}
  \vspace{-10pt}
\end{figure}
}
Note that $\ell_v$ is the sublevel set of $g_v$ at $\rad(P)$,
that is, $\ell_v = \{ x \in \ell \mid g_v(x) \leq \rad(P)\}$.
%In addition, since $g_v$ is quasiconvex by Corollary~\ref{coro:dist_seg},
%the part of slope $0$ in fact specifies the minima of $g_v$.
Since $\geocen(P)\neq\emptyset$ and thus $\ell_v \neq \emptyset$,
no endpoint of $\ell_v$ lies strictly between $x_1$ and $x_2$.
In other words, both endpoints of $\ell_v$ should lie in the subset
$\seg{ax_1} \cup \seg{x_2b}$.
Thus, once we know the values of $g_v(a) = d(v, a)$ and $g_v(b) = d(v, b)$,
we can identify the endpoints of $\ell_v$ in $O(1)$ time
by Lemma~\ref{lem:dist_seg_slope1}.
\end{proof}
}
\ShoLong{}{\pfEllV}

Thus, our last task can be completed as follows:
Compute the two shortest path maps $\spm(a)$ and $\spm(b)$ with sources $a$ and $b$, respectively,
by running the algorithm by Guibas et al.~\cite{ghlst-ltavspptsp-87}.
This evaluates $d(v, a)$ and $d(v, b)$ for all vertices $v$ of $P$ in linear time.
Next, we compute $\ell_v$ for all vertices $v$ of $P$ by Lemma~\ref{lem:ell_v}
and find their common intersection, which finally identifies $\geocen(P)$.
All the effort to obtain $\geocen(P)$ is bounded by $O(n)$.

\begin{theorem} \label{thm:center}
 The $L_1$ geodesic radius and all centers of a simple polygon with $n$ vertices
 can be computed in $O(n)$ time.
\end{theorem}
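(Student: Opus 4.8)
The plan is to assemble the pieces already developed in the paper into a direct, linear-time algorithm, and to verify that each component runs within the claimed bound. The overall strategy is to reduce the computation of $\geocen(P)$ to finding the common intersection of at most $n$ intervals on a single diagonal $\ell$, where each interval is the trace $\ell_v = \geoball_v(\rad(P)) \cap \ell$ of a vertex ball.

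First I would invoke Theorem~\ref{thm:diameter} to compute $\diam(P)$ together with a diametral pair of vertices $(v_1, v_2)$ in $O(n)$ time. By Lemma~\ref{lem:rad=diam/2}, this immediately yields $\rad(P) = \diam(P)/2$ at no additional cost. The next step is to localize the center: I would build the two shortest path maps $\spm(v_1)$ and $\spm(v_2)$ using the algorithm of Guibas et al.~\cite{ghlst-ltavspptsp-87} (valid for the $L_1$ case by Fact~\ref{fact:euc_simple}), from which the balls $\geoball_{v_1}(\rad(P))$ and $\geoball_{v_2}(\rad(P))$ can be extracted by traversing the cells. Their intersection is a single segment of slope $\pm 1$ (possibly degenerate) by Corollary~\ref{coro:geoball_pconvex1}; extending it to a full diagonal $\ell = \seg{ab}$ with $a, b \in \bd P$ gives the search space, and since $\geocen(P) = \bigcap_{v \in V}\geoball_v(\rad(P)) \subseteq \geoball_{v_1}(\rad(P)) \cap \geoball_{v_2}(\rad(P)) \subseteq \ell$, we are justified in restricting attention to $\ell$.

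The core of the argument is then the reduction to a one-dimensional intersection problem. Because $\geocen(P) \subseteq \ell$, we have $\geocen(P) = \bigcap_{v \in V} \ell_v$, where each $\ell_v$ is an interval on $\ell$ by the quasiconvexity of the distance function (Corollary~\ref{coro:dist_seg}). To evaluate all the $\ell_v$ efficiently, I would compute the shortest path maps $\spm(a)$ and $\spm(b)$ in $O(n)$ total time, which furnishes $d(v,a)$ and $d(v,b)$ for every vertex $v$ simultaneously. By Lemma~\ref{lem:ell_v}, each endpoint pair of $\ell_v$ can then be read off in $O(1)$ time from these two distance values, exploiting the precise three-piece structure of $d(v,\cdot)$ along a slope-$\pm1$ segment established in Lemma~\ref{lem:dist_seg_slope1}. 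Intersecting $n$ intervals on a line takes $O(n)$ time by tracking the maximum left endpoint and minimum right endpoint, and the result is exactly $\geocen(P)$.

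The step I expect to require the most care is not any single algorithmic bottleneck but rather the bookkeeping that guarantees every subroutine stays linear: in particular, that extracting the balls from the shortest path maps and performing the local search at the midpoint of $\pi_2(v_1,v_2)$ to locate the slope-$\pm1$ intersection segment can both be done without sorting or logarithmic-factor overhead. The subtlety is that Lemma~\ref{lem:ell_v} crucially relies on $\geocen(P) \neq \emptyset$ to ensure no endpoint of $\ell_v$ falls strictly between the breakpoints $x_1$ and $x_2$; I would make sure this nonemptiness is guaranteed in advance, which follows because $\rad(P)$ is by definition the radius and hence the intersection of all balls at radius $\rad(P)$ is nonempty. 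Summing the $O(n)$ costs of diameter computation, the four shortest path map constructions, the $O(1)$-per-vertex interval evaluations, and the final linear interval intersection yields the claimed $O(n)$ total running time, completing the proof of Theorem~\ref{thm:center}.
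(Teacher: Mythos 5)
Your proposal is correct and follows essentially the same route as the paper's own proof: compute $\diam(P)$ and a diametral vertex pair, set $\rad(P)=\diam(P)/2$ via Lemma~\ref{lem:rad=diam/2}, localize $\geocen(P)$ to a slope-$\pm1$ diagonal $\ell$ through the intersection of the two diametral balls, and then intersect the $n$ intervals $\ell_v$ obtained in $O(1)$ time each from $\spm(a)$ and $\spm(b)$ using Lemmas~\ref{lem:dist_seg_slope1} and~\ref{lem:ell_v}. All steps and time bounds match the paper's argument.
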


%%%%%%%%%%%%%%%%%%%%%%%%%%%%%%%%%%%%%%%%%%
\section{Concluding Remarks}
\label{sec:conclusion}
%%%%%%%%%%%%%%%%%%%%%%%%%%%%%%%%%%%%%%%%%%

In this paper, we presented a comprehensive study on the
$L_1$ geodesic diameter and center of simple polygons,
resulting in optimal linear-time algorithms.
Our approach relies on observations about $L_1$ geodesic balls,
in particular, the $P$-convexity (Lemma~\ref{lem:geoball_pconvex}) and
the Helly-type theorem (Theorem~\ref{thm:geoball_helly}).
These are key tools to show structural properties of
the diameter and center.

One might be interested in extending this framework to polygons with holes,
namely, \emph{polygonal domains}.
However, it is not difficult to see that few of the observations we made
extend to general polygonal domains.
First and foremost, an $L_1$ (also, Euclidean) geodesic ball may not be
$P$-convex when $P$ has a hole.
In addition, the Helly number of $L_1$ geodesic balls in a polygonal domain
is strictly larger than two:
one can easily construct three balls around a hole such that
every two of them intersect but the three have no common point.
Also, Lemma~\ref{lem:farthest_neighbor} (the existence
of a farthest neighbor that is a vertex) does not always hold in polygonal domains.
Bae et al.~\cite{bko-gdpd-13} have exhibited several examples
of polygonal domains in which a farthest neighbor with respect to the Euclidean geodesic distance
is a unique point in the interior. This construction can easily be extended to the $L_1$ geodesic distance.

\section*{Acknowledgements}
S.W.~Bae was supported by Basic Science Research Program through the National Research Foundation of Korea (NRF) funded by the Ministry of Science, ICT \& Future Planning (2013R1A1A1A05006927). Y.~Okamoto was supported by Grant-in-Aid for Scientific Research from Ministry of Education, Science and Culture, Japan, and Japan Society for the Promotion of Science (JSPS). H. Wang was supported in part by NSF under Grant CCF-1317143.
%Another possible extension is to the link distance in simple polygons.

%=====================end of %document=================

%\newpage

{
\bibliographystyle{splncs03}
\bibliography{geodiamcen}

}

%add appendix below
%\newpage
%\normalsize
%\appendix
%
%\section*{Appendix}
\iffalse
\ShoLong{
\newpage
\appendix
\noindent {\Large \bf APPENDIX}
\section*{Proof of Lemma~\ref{lem:P-convex}}
\pfLemPConv

\section*{Proof of Lemma~\ref{lem:pconvex-bdconvex}}
\pfLemBCConvex

\section*{Proof of Corollary~\ref{coro:geoball_seg}}
\pfCorGeoBall

\section*{Proof of Corollary~\ref{coro:dist_seg}}
\pfCorDistSeg

%\section*{Proof of Lemma~\ref{lem:threeballs}}
%\pfThreeBalls

\section*{Proof of Lemma~\ref{lem:farthest_neighbor}}
\pfFarNeig

\section*{Proof of Lemma~\ref{lem:three_chains}}
\pfThreeChains

\section*{Proof of Lemma~\ref{lem:farthest_vertex_prob}}
\pfFarthestVertex

\section*{Proof of Lemma~\ref{lem:rad=diam/2}}
\pfRadDiam

\section*{Proof of Lemma~\ref{lem:intersection_of_balls}}
\pfIntersectionBalls

\section*{Proof of Lemma~\ref{lem:dist_seg_slope1}}
\pfSegSlope

\section*{Proof of Lemma~\ref{lem:ell_v}}
\pfEllV
}{}
\fi
\end{document}